\documentclass[unicode,runningheads,orivec]{llncs}

\usepackage[T1]{fontenc}
\usepackage{graphicx}
\usepackage[linesnumbered,algoruled,vlined]{algorithm2e}
\DontPrintSemicolon
\SetArgSty{}
\SetAlCapNameSty{textsf}
\SetKw{KwOr}{or}
\SetKw{KwAnd}{and}
\SetKw{KwNot}{not}
\SetKwProg{Procedure}{Procedure}{}{}
\usepackage{amsmath}

\usepackage{amsthm}
\usepackage[unicode=true,psdextra]{hyperref}
\usepackage{tcolorbox}
\usepackage{amssymb}
\usepackage{booktabs}
\usepackage{array}
\usepackage{enumitem}
\usepackage{cite}
\usepackage[full,disableredefinitions,small]{complexity}
\usepackage[size=footnotesize, color=blue!20]{todonotes}
\setuptodonotes{inline}
\usepackage{xspace}
\usepackage{optidef} 
\usepackage{xpatch}
\usepackage{xstring}
\ExpandArgs{c}\xpatchcmd{environment mini code}{minimize}{minimise}{}{}
\ExpandArgs{c}\xpatchcmd{environment mini code}{minimize}{minimise}{}{}
\ExpandArgs{c}\xpatchcmd{environment mini code}{minimize}{minimise}{}{}
\ExpandArgs{c}\xpatchcmd{environment mini! code}{minimize}{minimise}{}{}
\ExpandArgs{c}\xpatchcmd{environment mini! code}{minimize}{minimise}{}{}
\ExpandArgs{c}\xpatchcmd{environment maxi code}{maximize}{maximise}{}{}
\ExpandArgs{c}\xpatchcmd{environment maxi* code}{maximize}{maximise}{}{}
\ExpandArgs{c}\xpatchcmd{environment maxi* code}{maximize}{maximise}{}{}
\ExpandArgs{c}\xpatchcmd{environment maxi code}{maximize}{maximise}{}{}
\usepackage[capitalize]{cleveref}
\newtheorem{observation}{Observation}
\theoremstyle{plain}
\usepackage{orcidlink}

\SetCommentSty{mycommfont}

\definecolor{dark blue}{rgb}{0.121,0.47,0.705}
\let\emph\relax\DeclareTextFontCommand{\emph}{\color{dark blue}\em}

\title{Moving Geometric Objects to Render Their Intersection Graph Connected or Locally Dense}

\titlerunning{Moving Geometric Objects}

\author{Tesshu~Hanaka\inst{1}\orcidlink{0000-0001-6943-856X} \and
Nicol\'as~{Honorato-Droguett}\inst{2}\orcidlink{0009-0005-1969-3649} \and
Hirotaka~Ono\inst{2}\orcidlink{0000-0003-0845-3947} \and
Samuel~Wolf\inst{3}\orcidlink{0009-0009-7098-6147} \and
Alexander~Wolff\inst{3}\orcidlink{0000-0001-5872-718X}}

\authorrunning{T.~Hanaka et al.}

\institute{Kyushu University, Fukuoka, Japan,
\email{hanaka@inf.kyushu-u.ac.jp} \and
Nagoya University, Nagoya, Japan,
\email{honorato.droguett.nicolas.n7@s.mail.nagoya-u.ac.jp},
\email{ono@i.nagoya-u.ac.jp} \and
Universit\"{a}t W\"{u}rzburg, W\"{u}rzburg, Germany,
\email{samuel.wolf@uni-wuerzburg.de}}

\newcommand{\gged}[1][]{%
  \textup{\textsc{GGED}\if\relax\detokenize{#1}\relax\else($#1$)\fi}\xspace
}
\newcommand{\ggedlong}{\textsc{Geometric Graph Edit Distance}\xspace}
\newcommand{\threepartition}{{\scshape 3-Partition}\xspace}
\newcommand{\partition}{{\scshape Partition}\xspace}
\newcommand{\planarmonothreesat}{{\scshape Planar Monotone 3-SAT}\xspace}

\newcommand{\defproblem}[4]{
  \begin{tcolorbox}%
    \vspace*{-1ex}\hspace*{-2.5ex}
    \begin{minipage}{0.98\textwidth}
      \begin{tabular}{@{}>{\normalsize}l@{~~}>{\normalsize}p{0.9\textwidth}@{}}
        {\bfseries Problem:} & #1\\[.1ex]
        {\bfseries Input:} & #2\\[.1ex]
        {\bfseries #4:} & #3
      \end{tabular}
    \end{minipage}\vspace*{-1ex}
  \end{tcolorbox}
}

\newcommand{\defdecproblem}[3]{\defproblem{#1}{#2}{#3}{Task}}
\newcommand{\defoptproblem}[3]{\defproblem{#1}{#2}{#3}{Output}}

\newcommand{\edgeless}{\ensuremath{\Pi_{\mathrm{edgeless}}}\xspace}

\newcommand{\kclique}[1][k]{\ensuremath{\Pi_{#1\mathrm{\text{-}clique}}}\xspace}
\newcommand{\kconnected}[1][k]{\ensuremath{\IfEqCase{#1}{%
        {1}{\Pi_{\mathrm{conn}}}}[\Pi_{#1\mathrm{\text{-}conn}}]}\xspace}

\providecommand{\A}{}\renewcommand{\A}{\ensuremath{\mathcal{A}}\xspace}

\renewcommand{\C}{\mathcal{C}}
\providecommand{\D}{}\renewcommand{\D}{\ensuremath{\mathcal{D}}\xspace}

\providecommand{\I}{}\renewcommand{\I}{\ensuremath{\mathcal{I}}\xspace}

\renewcommand{\S}{\ensuremath{\mathcal{S}}\xspace}

\newcommand{\X}{\ensuremath{\mathcal{X}}\xspace}

\newcommand{\onenorm}[1]{\lVert #1 \rVert_1}
\newcommand{\abs}[1]{\lvert #1 \rvert}
\newcommand{\len}[1]{\mathrm{len}(#1)}
\newcommand{\size}[1]{|#1|}
\newcommand{\set}[1]{\{#1\}}

\newcommand{\polytime}{\operatorname{poly}}
\makeatletter
\DeclareMathOperator*{\argmin}{\smash[b]{\operator@font arg\,min}}
\DeclareMathOperator*{\argmax}{\smash[b]{\operator@font arg\,max}}
\makeatother

\crefname{equation}{eq.}{eqs.}
\Crefname{equation}{Eq.}{Eqs.}
\crefname{algocf}{algorithm}{algorithms}
\Crefname{algocf}{Algorithm}{Algorithms}
\crefname{observation}{observation}{observations}
\Crefname{observation}{Observation}{Observations}

\spnewtheorem{claim}{Claim}{\bfseries}{\rmfamily}
\crefname{claim}{claim}{claims}
\Crefname{claim}{Claim}{Claims}

\usepackage{thm-restate}
\usepackage{apptools}
\newcommand{\restateref}[1]{\IfAppendix{\hyperref[#1]{$\star$}}{\hyperref[#1*]{$\star$}}}

\begin{document}






\maketitle

\begin{abstract}
    In this paper, we study graph editing problems on geometric
    intersection graphs.  For a tuple~$\S=(S_1,\dots,S_n)$ of geometric objects in some
    Euclidean space, let $G_\S$ be their intersection graph.
    We study the problem of
    finding a tuple~$D=(d_1,\dots,d_n)$ of movement vectors such
    that the resulting intersection graph~$G_{\S+D}$ (after moving,
    for every $i \in \{1,\dots,n\}$, object~$S_i$ by $d_i$)
    has a predefined property and the total movement
    distance $\|D\|$ is minimum.  In the weighted version, we are also given a weight vector $w=(w_1,\dots,w_n)$ with
    positive entries, and the objective is to minimise the total
    weighted movement distance $\|w \cdot D\|$.

    \quad We first consider the property {\em locally dense}, which we define
    as containment of a $k$-clique.
    Given $n$ weighted intervals, we solve the problem with respect to this property
    in $O(k^{1/3} n \log^{1+\varepsilon} n)$ time for any $\varepsilon>0$.
    We then consider {\em $k$-connectivity} for $1\le k \le n-1$. Given $n$ unweighted unit intervals, we solve the problem in $O(n^2 \log n)$ time and, for $k=1$, in $O(n\log n)$ time.
    For $k=1$, we prove strong \NP-hardness on intervals of arbitrary length and on weighted unit disks (with only two distinct weights), and weak \NP-hardness on weighted intervals (even when lengths equal weights).
    \keywords{Graph modification \and Geometric intersection graphs
        \and Local density \and Connectivity \and Minimising movement}
\end{abstract}

\section{Introduction}

A \emph{(geometric) intersection graph} is a graph whose vertices correspond to some geometric objects and whose edges represent the intersection of pairs of objects.
Fundamental examples are interval graphs on the real line and disk graphs in the Euclidean plane.
These classes have been extensively studied because their geometric structure strongly affects the complexity of many well-known problems.
For example, \textsc{Maximum Clique}, \textsc{Graph Colouring}, and \textsc{Minimum Vertex Cover} are \NP-hard on general graphs~\cite{Garey1979} but polynomial-time solvable on interval graphs~\cite{Golumbic1980,Gupta1982}.
In contrast, many standard graph problems remain \NP-hard on disk graphs~\cite{Breu1998,Clark1990}~-- even on unit disk graphs.
It is thus natural to ask which problems remain tractable on specific intersection graph classes.

In recent work, graph modification has been studied on geometric intersection graphs under natural geometric edit operations.
Fomin, Golovach, Inamdar, Saurabh, and Zehavi~\cite{Fomin2023}
studied moving at most $k$ disks in a given set of unit disks by at most~$d$ so that the intersection graph becomes edgeless, eliminating all pairwise overlaps.
They obtained kernels and {\FPT}-algorithms parameterised by $k+d$.
The min-sum variant is called {\ggedlong} (\gged)~\cite{HonoratoDroguett2024,HonoratoDroguett2025,HonoratoDroguett2026} and is defined as follows.
Given a collection of geometric objects of a specific class, how do we move the objects, minimising the total movement distance, such that their intersection graph becomes, say, edgeless, complete, or $k$-clique-free?
Formally, {\gged} is defined as follows (where, for a geometric object $S$ and a vector $d$, $S+d=\{s+d : s \in S\}$).

\defoptproblem{\ggedlong w.r.t.\ graph class~$\Pi$}%
{An $n$-tuple~$\S=(S_1,\dots,S_n)$ of geometric objects in
    $\mathbb{R}^d$, weights $w \in \mathbb{R}^n_{> 0}$.}%
{An $n$-tuple $D=(d_1,\dots,d_n) \in \mathbb{R}^{d \cdot n}$ such that
    $G_{\S+D} \in \Pi$ and $\|w \cdot D\|=\sum_{i=1}^n w_i \cdot \|d_i\|$
    is minimum.} 

We use \gged[\Pi] as shorthand for the above problem and we continue the above line of work for \emph{connectivity} and \emph{local density}.
Specifically, we move objects until their intersection graph is $k$-connected ($\Pi=\kconnected$) or contains a $k$-clique ($\Pi=\kclique$).
These problems arise in many domains.
In connectivity augmentation, the task is to add edges to a given graph to reach a connectivity threshold.
It is a classical problem in algorithmic graph theory~\cite{Eswaran1976} and has also been studied under (beyond-) planarity constraints,
both for geometric \cite{rw-acpgg-12,Toth-EJC12} and topological graphs
\cite{KantB-WADS91,Akitaya2025}.
\gged[\kconnected] is a geometric variant of connectivity augmentation.
Related movement formulations include pebble movement on graphs~\cite[Thm.~1,~5]{Demaine2009,Demaine2014}, movement in geometric settings~\cite{Anari2016} and a min-max variant on a closed cycle~\cite{Li2026}.
Applications include sensor networks~\cite{Gupta1999} and wireless connectivity of mobile agents~\cite{Bredin2005,Wang2006,Zavlanos2007,Czyzowicz2010,AndrewsWang-Algorithmica17}.

Local density has also been well studied.
Given a set of $n$ weighted points, the \emph{capacitated geometric
    median problem}~\cite{Shenmaier2021,Shenmaier2021a} asks to
choose~$k$ points from the set and a centre~$c$ such that the total
weighted distance to $c$ is minimised.
A related problem is the \emph{smallest $k$-enclosing circle
    problem}~\cite{Efrat1994} where instead the maximum distance of the
selected points from the circle center is minimised.

The above properties also arise specifically in one-dimensional models.
Sensing ranges on highways and DNA fragments in genomes are naturally modelled by intervals, and connectivity of their intersection graphs models communication, coverage, or chains of overlapping fragments~\cite{Czyzowicz2010,AndrewsWang-Algorithmica17,Yan2012,Zhang1994,Wu2014}.
In these models, \mbox{($k$-)connectivity} represents chains that remain connected after removing fewer than $k$ objects, while $k$-cliques represent $k$ objects with a common intersection.
Consequently, it is natural to study {\gged} even restricted to intervals.

\paragraph{Our Contribution.}
For \gged[{\kclique}], we give an $O(k^{1/3}n\log^{1+\varepsilon}n)$-time algorithm on $n$ weighted intervals of arbitrary length, for any $\varepsilon>0$ (\cref{sec:k_clique}), and an $O(n^2\log n)$-time algorithm for \gged[{\kconnected[k]}] on $n$ unweighted unit intervals. We improve the latter to $O(n\log n)$ time for $k=1$; see \cref{sec:unweighted_unit_intervals}.

We also prove several hardness results for \gged[{\kconnected[1]}] (\Cref{sec:hardness}).
We show that the problem is strongly \NP-hard on unweighted intervals of arbitrary length, weakly \NP-hard on weighted intervals even when every length equals its weight, and strongly \NP-hard on weighted unit disks with only two weights.
We also prove weak \NP-hardness for the {\em edgeless} case and para-\NP-hardness with respect to the number of maximal cliques, even when there is only one, resolving two
open problems of~\cite{HonoratoDroguett2026}.
\Cref{tab:summary} summarises the new and existing results.
We start with preliminaries (\cref{sec:preliminaries}) and close with open problems (\cref{sec:open}).

\begin{table}[bt]
    \centering
    \caption{Results for \kconnected and \kclique. The bound in row~2
        holds for every $\varepsilon>0$. `\#Weights' is the number of
        distinct weights (`1' denotes unit weights).}
    \label{tab:summary}
    \begin{tabular}{@{}l@{\quad}l@{~~}c@{\quad}c@{\quad}c@{}}
        \toprule
        \bf Graph Class & \bf Object Type                     & \bf \#Weights & \bf Complexity                     & \bf Reference
        \\\midrule
        \kclique        & Unit Interval                       & 1             & $O(n\log n)$
                        & \cite[Thm.~4]{HonoratoDroguett2024}                                                                                                   \\
                        & Interval                            & $n$           & $O(k^{1/3}n\log^{1+\varepsilon}n)$
                        & Theorem\hfill\ref{thm:k_clique}                                                                                                       \\
        \kclique[n]     & Interval                            & $n$           & $O(n)$
                        & \cite[Thm.~1]{HonoratoDroguett2024}                                                                                                   \\
        \midrule
        \kconnected     & Unit Interval                       & 1             & LP ($\polytime$)
                        & \cite[Thm.~8]{HonoratoDroguett2024}                                                                                                   \\
                        & Unit Interval                       & 1             & $O(n^2\log n)$
                        & Theorem\hfill\ref{thm:kconn_nlogn}                                                                                                    \\
        \kconnected[1]  & Unit Interval                       & 1             & $O(n\log n)$
                        & Theorem\hfill\ref{thm:kconn_nlogn}                                                                                                    \\
                        & Interval                            & 1             & strongly \NP-hard                  & Theorem\hfill\ref{thm:snph_intv}           \\
                        & Interval ($\len{I_i}=w_i$)          & $n$           & weakly \NP-hard                    & Theorem\hfill\ref{thm:wequall_conn_nphard} \\
                        & Unit Disk                           & 2             & strongly \NP-hard                  & Theorem\hfill\ref{thm:snph_wdisk}          \\
        \bottomrule
    \end{tabular}
\end{table}

\section{Preliminaries}
\label{sec:preliminaries}

For $n\in \mathbb{Z}^+$, we use $[n]$ and $[n]_0$ as shorthand for the
sets $\set{1,\ldots,n}$ and $\set{0,\ldots,n}$, respectively.  The
\emph{distance} of two points $p$ and $q$ in~$\mathbb{R}$ is
$\abs{p-q}$.  The \emph{$L_2$ distance} (Euclidean distance) of two
points $(p_x,p_y)$ and $(q_x,q_y)$ in~$\mathbb{R}^2$ is
$\lVert p-q\rVert_2 = \sqrt{(p_x-q_x)^2+(p_y-q_y)^2}$.

\paragraph{Geometry.}
Given two points $a,b\in \mathbb{R}$ with $a\le b$, the \emph{interval} defined by $a$ and $b$ is the set $I = [a,b] = \set{x\in \mathbb{R}\colon a\le x\le b}$.
We call $a$ and $b$ the \emph{endpoints} of $I$; $\len{I} = |b-a|$ is the \emph{length} of $I$.
If $\len{I} = 1$, then $I$ is a \emph{unit interval}.
The point $(a+b)/2$ is the \emph{centre} of $I$ and is denoted by $c(I)$.
Given a number $r>0$ and a point $p\in \mathbb{R}^2$, the set
$\set{x\in \mathbb{R}^2\colon \lVert x-p \rVert_2 \le r}$ is the
radius-$r$ \emph{disk} centred at~$p$.  Since it slightly simplifies
the description of our results, we call disks with {\em diameter}~1
(that is, radius-1/2 disks) \emph{unit disks}.

\paragraph{Graphs.}
We consider only simple, finite, and undirected graphs.  For a
graph~$G$, let $V(G)$ be the vertex set and let
$E(G) \subseteq {V(G) \choose 2}$ be the edge set of~$G$.
Given a tuple $\S = (S_1,\ldots,S_n)$ of geometric objects in $\mathbb{R}^d$, the \emph{geometric intersection graph} of~\S, $G_\S$, has a vertex for each element of~\S and an edge between two elements if they intersect.
If \S is a tuple of (unit) intervals, then $G_\S$ is a \emph{(unit) interval graph}.
Similarly, if~\S is a tuple of (unit) disks, then $G_\S$ is a \emph{(unit) disk graph}.
In this paper, we focus on the graph classes $\kclique = \{G : G\text{ has a $k$-clique}\}$ and $\kconnected = \{G : G\text{ is $k$-connected}\}$.
We use $\kconnected[1]$ as shorthand for~$\Pi_{\text{1-conn}}$.

We assume the following known properties of instances of
intervals~\cite{HonoratoDroguett2025,HonoratoDroguett2026} throughout
the paper.  Due to our objective function, they hold for both
\gged[{\kconnected[1]}] and \gged[\kclique].
\begin{itemize}
    \item There exists an optimal solution where at least one movement
          vector is zero.
    \item In the case of intervals with unit length and unit weight, there
          exists an optimal solution that preserves the initial order of the
          objects.
\end{itemize}

\section{\texorpdfstring{\gged[\kclique]}{GGED(Pi-kclique)} on Weighted Intervals of Arbitrary Length}
\label{sec:k_clique}

In this section, we present a subquadratic-time algorithm for
\gged[{\kclique}] on weighted intervals, for any $1 < k < n$.  For
$k = n$, the problem can be solved in $O(n)$
time~\cite[Thm.~1]{HonoratoDroguett2024}, and for $k=1$, for every
non-empty instance~$\I$, the graph $G_\I$ already contains a 1-clique.
Since there is an optimum solution that is attained at an endpoint, a
straightforward quadratic-time algorithm evaluates each endpoint in
linear time: it uses the $k$-selection algorithm~\cite{Blum1973} and
then sums up the $k$ smallest weighted movement costs.
The key insight for a faster algorithm is a shift of perspective.
Instead of considering intervals directly,
we translate our problem into a two-dimensional problem on line segments.
To this end, we consider functions $f_i \colon \mathbb{R} \to \mathbb{R}_{\ge0}$ for each interval
$I_i \in \I$ with weight $w_i$, where $f_i(x)$ represents the cost of moving $I_i=[a_i, b_i]$ to contain $x$.
More precisely, let
\begin{align*}
    f_i(x) = \begin{cases}
                 w_i(a_i-x) , & \text{ if } x < a_i             \\
                 0,           & \text{ if } a_i \leq x \leq b_i \\
                 w_i(x-b_i)   & \text{ else.}
             \end{cases}
\end{align*}
Note that $f_i$ is a piecewise linear convex and continuous function that can be interpreted as an unbounded x-monotone convex chain (of line segments).
Two functions $f_i$ and $f_j$ cross each other (change sides) at most twice;
that is, they are \emph{pseudo-parabolas}.
To solve \gged[\kclique], we want to know, for every $x\in \mathbb{R}$, the set of indices $S_x \subseteq [n]$ that satisfies: $i \in S_x$ if and only if $f_i(x)$ is among the $k$ smallest values in the set $\{ f_i(x) \mid I_i \in \I\}$.

For a tuple $\I = (I_1,\dots,I_n)$ of intervals,
the functions $f_1,\ldots,f_n$ form an arrangement of x-monotone curves.
In such an arrangement, the \emph{$k$-level} is an x-monotone sequence of curve segments such that, for every $x$, exactly $k$ curves lie on or below it; see \cref{fig:klevel_intv} for an example.
Then in the arrangement of our pseudo-parabolas, the $k$-level gives us the set $S_x$ for every $x \in \mathbb{R}$.
Thus the minimum of $\sum_{i \in S_x} f_i(x)$ over all $x$ equals the minimum cost of moving intervals in~$\I$ so that the resulting intersection graph contains a $k$-clique.

\begin{figure}[bt]
    \centering
    \includegraphics[scale=1,page=2]{gged_kclique_kconn.pdf}
    \caption{The $k$-level $L_k$ (bold red x-monotone curve) for an arbitrary tuple of nine intervals for $k =4$.
        The four cheapest intervals to gather at point $x$ are $I_2,I_3,I_4,I_8$.
        The functions are shifted by the symbolic value $\delta_i$.
        There are $16$ vertices of $L_k$ shown.}
    \label{fig:klevel_intv}
\end{figure}

Let $H$ be the arrangement of the pseudo-parabolas corresponding to the functions $f_1,\dots,f_n$ and $L_k$ its $k$-level.
We write $L_k(x)$ for its value at x-coordinate~$x$ and $|L_k|$ for its \emph{complexity} (i.e., the number of vertices of $L_k$).

Using the $k$-level, we can show the following result.

\begin{theorem}\label{thm:k_clique}
    Given a tuple of $n$ weighted intervals, 
    \gged[\kclique] can be solved in $O(nk^{1/3}\log^{1+\varepsilon}n)$ time for an arbitrarily small $\varepsilon >0$.
\end{theorem}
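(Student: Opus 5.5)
We reduce the problem to computing and sweeping the $k$-level $L_k$ of the arrangement $H$ of the functions $f_1,\dots,f_n$. By the discussion preceding the theorem, the optimum cost equals $\min_{x\in\mathbb{R}} \sum_{i\in S_x} f_i(x)$. Each $f_i$ is linear on the pieces delimited by $a_i$ and $b_i$. Between two consecutive x-coordinates of vertices of $L_k$ or breakpoints $a_i,b_i$, the set $S_x$ is fixed, so the objective $F(x)=\sum_{i\in S_x}f_i(x)$ is linear there. Hence the minimum is attained at a vertex of $L_k$ or at an endpoint, and it suffices to evaluate $F$ at these $O(|L_k|+n)$ positions.

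The first key step is a combinatorial bound on $|L_k|$. Each $f_i$ consists of three segments (two rays and a horizontal segment), and any two of them cross at most twice. After a symbolic perturbation (the shifts $\delta_i$ of \cref{fig:klevel_intv}), the arrangement is in general position. We then invoke the known bound for $k$-levels in arrangements of pseudo-parabolas, or more concretely of $x$-monotone piecewise-linear curves with $O(1)$ pieces each and pairwise few crossings. The Tamaki--Tokuyama / Chan type bound gives $|L_k|=O(nk^{1/3})$ (via the lens-cutting technique, reducing to the pseudo-line bound $O(nk^{1/3})$ of Dey). I expect this to be the main obstacle: one must check that the cutting of pseudo-parabolas into pseudo-segments does not blow up the $k$-level complexity beyond $O(nk^{1/3})$ up to polylogarithmic factors. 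I would first try to cite Chan's result on levels in arrangements of curves, which handles pseudo-parabolas with such a bound.

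The second step is the algorithmic one: computing $L_k$ in time $O(|L_k|\log^{1+\varepsilon}n)$. I would use Chan's output-sensitive $k$-level construction, which walks along the level and, at each step, uses a dynamic data structure (kinetic/dynamic lower envelope or a dynamic planar structure with $O(\log^{1+\varepsilon} n)$ amortised update and query) to find the next curve crossing $L_k$. Simultaneously, I would maintain $S_x$ implicitly via aggregate sums. We keep $W=\sum_{i\in S_x}$ of the slope and intercept of the current piece of $f_i$. Each level vertex swaps one curve in and one out of $S_x$, and each breakpoint $a_i,b_i$ of a curve in $S_x$ changes its linear piece. Both events cost $O(1)$ to update, so $F$ can be evaluated in $O(1)$ at each event. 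The $2n$ breakpoints are merged in sorted order in $O(n\log n)$ time.

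Finally, summing up gives $O(n\log n + nk^{1/3}\log^{1+\varepsilon}n)=O(nk^{1/3}\log^{1+\varepsilon}n)$. Taking the $x$ with minimum $F$, an optimal solution moves each interval $I_i$ with $i\in S_x$ minimally to contain $x$ and leaves all others fixed. Its cost is exactly $F(x)$, and optimality follows from the reduction in the first step.
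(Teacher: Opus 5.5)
Your plan has the same three ingredients as the paper's proof: a sweep along $L_k$ with aggregated slopes and intercepts in $O(n+|L_k|)$ time after sorting, an output-sensitive construction of $L_k$ in $O((n+|L_k|)\log^{1+\varepsilon}n)$ time, and the bound $|L_k|=O(nk^{1/3})$. The sweep and the output-sensitive construction are fine. \textbf{The gap is in the third ingredient}, the step you flag yourself: no $O(nk^{1/3})$ bound is known for $k$-levels of pseudo-parabolas, and the lens-cutting route cannot supply one. The general cutting theorems (Tamaki--Tokuyama and successors) only guarantee a superlinear number of pseudo-segments. Feeding that many pieces into Dey's bound is exactly what produces the weaker exponents. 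Chan's result on levels in arrangements of curves, which you propose to cite, gives $O(nk^{7/9}\log^{2/3}k)$ for pseudo-parabolas. As written, your argument therefore only supports a running time with a substantially larger power of $k$.

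The missing observation is that no lens-cutting is needed, because each $f_i$ already consists of three straight pieces.
\begin{itemize}
\item Restrict to the slab between the leftmost endpoint $a$ and the rightmost endpoint $b$ of the intervals; the optimum lies there.
\item Clip the two rays of every $f_i$ to this slab and cut at $a_i$ and $b_i$. This gives $3n$ line segments. Being straight, any two cross at most once, so every lens between $f_i$ and $f_j$ is cut automatically at the breakpoints.
\item For each $x\in(a,b)$ off the breakpoints, exactly one piece of every chain is present. Hence the $k$-level of the chains coincides with the $k$-level of this segment arrangement, after your symbolic shifts separate the horizontal pieces.
\item Dey's bound, as extended to segments by Agarwal, Aronov, Chan, and Sharir, gives $|L_k|=O(nk^{1/3})$ up to an inverse-Ackermann-type factor, which is absorbed by $\log^{\varepsilon}n$.
\item Chan's output-sensitive algorithm adapted to segments, or Har-Peled's randomised algorithm for curves crossing at most twice, then computes $L_k$ within the claimed time.
\end{itemize}
This is how the paper argues. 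With this replacement, your proof goes through.
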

\begin{proof}
    Suppose we have already computed a $k$-level of the function (segments)~$f_i$. Then,
    we claim the following.

    \begin{restatable}[\restateref{claim:k_clique_given_k_level}]{claim}{kcliquegivenklevel}
        \label{claim:k_clique_given_k_level}
        Given the $k$-level $L_k$ and endpoints of $\I$ sorted, \gged[\kclique]
        can be solved in $O(n+|L_k|)$ time.
    \end{restatable}
    \begin{proof}[Proof sketch]
        \renewcommand{\qedsymbol}{$\triangle$}
        We describe an algorithm that computes an optimal point~$x^\star$ and a set~$S^\star$ of $k$ (interval) indices minimising $\sum_{i\in S^\star}f_i(x^\star)$.
        The algorithm sweeps the event points (the x-coordinates of vertices of~$L_k$ and interval endpoints) from left to right.
        It maintains
        a candidate set~$S$ of interval indices, and its objective function $\sum_{i \in S} f_i$
        using two variables~$m$ and~$t$ that store the sum of slopes and the sum of y-intercepts of the functions~$f_i$ for $i\in S$, respectively.
        Note that between consecutive event points, $S$ is fixed and
        $\sum_{i\in S}f_i$ is linear, so it suffices to evaluate the objective at event points.
        Further, note that the set~$S$ changes only at the x-coordinates of vertices of~$L_k$, and $x^\star$ lies between the leftmost and rightmost endpoints of~$\I$.
        Additionally, $m$ and $t$ may only change if a new function is added to~$S$ or an endpoint of an interval is reached.
        Consequently, it suffices to process the event points in the range spanned by the leftmost and rightmost endpoints of $\I$.
        At each event point $x$, we evaluate $mx+t$ and update
        $x^\star$ whenever this improves the best value found so far.
        After the sweep, we recover $S^\star$ in $O(n)$ time by selecting the indices of the $k$ smallest values $f_i(x^\star)$.
    \end{proof}

    It remains to compute the $k$-level of~$H$ using an appropriate representation.
    Our arrangement~$H$ consists of chains, each represented by two rays and a horizontal line segment.
    First, we can assume that no horizontal segment is part of the $k$-level of $H$ since otherwise $\I$ already contains a $k$-clique.
    This can be checked in $O(n\log n)$ time (it is also possible to introduce a symbolic shift $\delta_i$ for each $f_i$ to ensure that two functions intersect at most twice based on a linear order derived from the containment relationship of the horizontal segments).

    We convert these functions into line segments by clipping the rays with an axis-aligned bounding box.
    The box spans an x-range from the leftmost endpoint~$a$ of~$\I$ to the rightmost endpoint~$b$ of~$\I$ and a y-range from $0$ to $(\max_{i \in [n]}\{w_i\})(b-a)$.
    By construction, this box contains all crossings within the x-range and the optimal point must lie in the range $[a, b]$.
    We can now employ an algorithm that computes the $k$-level~$L_k$ of line segments.

    One of the first algorithms for computing the $k$-level in line arrangements is the algorithm by Edelsbrunner and Welzl~\cite{ew-cb2daa-86}, which runs in $O(|L_k|\log^2n)$ time and can be adapted to compute the $k$-level in arrangements of x-monotone convex chains~\cite{EverettRvK-IJCGA96}.
    The algorithm of Har-Peled~\cite{HarPeled2000} computes the $k$-level of arcs that intersect each other at most $t$ times, in expected $O(\lambda_{t+2}(n + |L_k|)\cdot\log n)$ time, where $\lambda_{t+2}(n + |L_k|)$ is the length of a {\em Davenport–Schinzel sequence} of order~$t+2$ having~$n + |L_k|$ symbols.
    In our case $t=2$, and it is known that $\lambda_{4}(n + |L_k|)\in O((n+|L_k|)\cdot 2^{\alpha(n+|L_k|)})$, where $\alpha$ is the inverse Ackermann function~\cite{davenport-schinzel-bounds}.
    Thus, Har-Peled's algorithm runs in expected $O((n + |L_k|)\log^{1+\varepsilon}n)$ time for some arbitrarily small $\varepsilon>0$.
    In his unpublished manuscript, Chan \cite{chan99-klevel} remarks that his algorithm for computing the $k$-level of line arrangements runs deterministically in $O((n + |L_k|)\log^{1+\varepsilon}n)$ time (again for some arbitrarily small $\varepsilon >0$) and can be adapted to compute the $k$-level of line segments.

    Using either the algorithm by Har-Peled~\cite{HarPeled2000} or Chan~\cite{chan99-klevel}, together with the upper bound for the complexity of the $k$-level of line segments by Dey~\cite{Dey1998,Agarwal1998a}, and~\Cref{claim:k_clique_given_k_level}, we obtain an overall runtime of $O(nk^{1/3}\log^{1+\varepsilon}n)$ to solve \gged[\kclique] on weighted intervals of arbitrary length.
\end{proof}


\section{\texorpdfstring{\gged[{\kconnected}]}{GGED(Pi-k-conn)} on Unweighted Unit Intervals}
\label{sec:unweighted_unit_intervals}

In this section, we study \gged[{\kconnected}] on unweighted unit intervals for $k\in [n-1]$ by relating our problem to variants of \textsc{Isotonic Regression}.
This problem asks, given a tuple of numbers $(a_1, \dots, a_n)$, to find values $z_1,\dots,z_n$
subject to $z_i \leq z_{i+1}$ minimising the $L_1$-error $\sum_{i=1}^n |a_i-z_i|$.
We use the following characterisation of unit-interval graphs in~\kconnected.
\begin{lemma}[{\!\!\cite[Lem.~10]{HonoratoDroguett2024}}]
    \label{lem:kconnectediff}
    Given a (sorted) tuple
    $\I=(I_1,\dots,I_n)$ of unit intervals, $G_\I \in \kconnected$ if and only if $I_i\cap I_{i+k}\neq\emptyset$ for every $i\in[n-k]$.
\end{lemma}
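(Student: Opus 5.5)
We prove both directions directly from the interval structure. Write $I_i=[c_i-\tfrac12,c_i+\tfrac12]$ with $c_1\le\dots\le c_n$. Two unit intervals intersect if and only if their centres are at distance at most~$1$. Because the centres are sorted, $I_i\cap I_j\neq\emptyset$ for $i<j$ implies $I_\ell\cap I_m\neq\emptyset$ for all $i\le \ell<m\le j$. So every set of consecutive intervals that contains its two extreme indices as an edge is a clique. This monotonicity is the main tool in both directions.

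\textbf{($\Leftarrow$)} Assume $I_i\cap I_{i+k}\neq\emptyset$ for all $i\in[n-k]$. Then every index $i$ is adjacent to all $j$ with $|i-j|\le k$. First, $n\ge k+1$ is implicit, and each window $\{i,\dots,i+k\}$ is a clique of size $k+1$. Now remove any set $X$ of at most $k-1$ vertices and take two remaining vertices $u<v$. Among the indices $u+1,\dots,u+k$ (restricted to $\le v$), at least one lies outside $X$, unless $v\le u+k$, in which case $u$ and $v$ are already adjacent. Let the next surviving index be $u'$. Then $u<u'\le u+k$, so $uu'$ is an edge. Iterating walks from $u$ to $v$ in the remaining graph, hence $G_\I-X$ is connected and $G_\I$ is $k$-connected.

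\textbf{($\Rightarrow$)} Suppose $I_i\cap I_{i+k}=\emptyset$ for some $i$, so $c_{i+k}-c_i>1$. Consider $X=\{i+1,\dots,i+k-1\}$, which has $k-1$ vertices. We claim $G_\I-X$ has no edge between $\{1,\dots,i\}$ and $\{i+k,\dots,n\}$. Indeed, for $a\le i$ and $b\ge i+k$ we have $c_b-c_a\ge c_{i+k}-c_i>1$. Both sides are nonempty, so $G_\I-X$ is disconnected and $G_\I$ is not $k$-connected.

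The only delicate points are conventions: $k$-connectivity requires more than $k$ vertices, and ties among equal centres must be handled. Ties are harmless because the sorted order may be fixed arbitrarily and the distance argument uses only $\le$. The main care goes into the ($\Leftarrow$) path-walking step, namely arguing that a surviving index exists within every window of $k$ consecutive indices.
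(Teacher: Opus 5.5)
The paper gives no proof of this lemma; it is imported from \cite[Lem.~10]{HonoratoDroguett2024}, so there is no in-paper argument to compare with. Your proof is correct and self-contained.

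For necessity, you exhibit the explicit separator $\{i+1,\dots,i+k-1\}$. This works because sorted centres give $c_b-c_a\ge c_{i+k}-c_i>1$ for all $a\le i$ and $b\ge i+k$.

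For sufficiency, you use the band structure: every vertex is adjacent to all vertices within index distance $k$. This lets you route around any set of at most $k-1$ removed vertices, always stepping to the smallest surviving larger index.

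One step deserves a line of justification, namely the band property for pairs near the right end. For $i<j\le i+k$ with $i+k>n$, the window starting at $i$ does not exist. Use the window $\{n-k,\dots,n\}$ instead, which exists because $n\ge k+1$ and contains both indices, so $c_j-c_i\le c_n-c_{n-k}\le 1$.

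Your remark on ties is also right. The condition depends only on the non-decreasing sequence of centre values, and that sequence does not depend on how ties are broken.
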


Using this lemma, the authors of~\cite{HonoratoDroguett2024} gave the following polynomial-time program for \gged[{\kconnected}] on unweighted unit intervals:
Minimise $\sum_{i=1}^n\abs{d_i}$ subject to $(c(I_{i+k})+d_{i+k})-(c(I_i)+d_i)\le1$ for $i\in[n-k]$, where $d_i$ is the displacement of~$I_i$.
While this result provides a polynomial-time algorithm, the runtime depends on a polynomial
of high degree, and linear programs may face numerical issues.
We instead give faster combinatorial algorithms for \gged[{\kconnected}].

\begin{theorem}\label{thm:kconn_nlogn}
    On unweighted unit intervals, \gged[{\kconnected}] can be solved in
    $O(n^2\log n)$ time if $k>1$ and in $O(n\log n)$ time if $k=1$.
\end{theorem}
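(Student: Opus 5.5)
Since the intervals are unit and unweighted, I will work with the centres $x_i=c(I_i)$ sorted increasingly. By the order-preservation property from the preliminaries, there is an optimal solution with $z_1\le\dots\le z_n$, where $z_i=x_i+d_i$. By \cref{lem:kconnectediff}, the problem then becomes
\[
\min \sum_{i=1}^n |x_i-z_i| \quad\text{subject to}\quad z_i\le z_{i+1}\ \text{and}\ z_{i+k}-z_i\le 1 .
\]
We may impose the monotonicity constraint for free: any solution of the constraint $z_{i+k}-z_i\le1$ can be sorted without increasing the cost, because $L_1$ matching against a sorted sequence is optimised by the sorted order, and sorting preserves the window condition (the $k$-th order gaps of sorted values are no larger).

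For $k=1$ I substitute $y_i=z_i-i$. The constraints $0\le z_{i+1}-z_i\le 1$ become $y_{i+1}\le y_i\le y_{i+1}+1$, which is awkward. Instead I use the simpler structure of this case: all constraints are differences between consecutive variables, so the problem is an $L_1$ fitting on a path with bounded increments. I will solve it by dynamic programming over $i$ with convex piecewise-linear cost functions. Let $F_i(z)$ be the optimal cost of $z_1,\dots,z_i$ with $z_i=z$. Then
\[
F_{i+1}(z)=|x_{i+1}-z|+\min_{z-1\le z'\le z}F_i(z').
\]
The inner minimum is an infimal convolution of $F_i$ with the indicator of $[0,1]$. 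On a convex piecewise-linear function, this amounts to shifting the increasing-slope part of $F_i$ right by $1$ while keeping the decreasing-slope part fixed. Adding $|x_{i+1}-z|$ inserts one breakpoint on each side. I store the breakpoints in two heaps (left part and right part, with a lazy additive offset on the right heap), in slope-trick fashion. Each step costs $O(\log n)$, which gives $O(n\log n)$ overall, and the optimal $z$ values are recovered by backtracking clamps. The care needed is in keeping the invariant that the minimiser lies between the two heaps when the new breakpoint is inserted on the "wrong" side. I expect the standard exchange step to handle this.

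For $k>1$, the constraints couple $z_i$ with $z_{i+k}$. I split the indices by residue mod $k$ into chains $C_r=(r,r+k,r+2k,\dots)$. Monotonicity across chains is implied by interleaving: if each chain is monotone, with gaps at most $1$, and the chains interleave ($z_{r}\le z_{r+1}\le\dots$), the constraints are just the full set above. My plan is to eliminate the interleaving constraints by a Lagrangian-free argument: the combined problem is an $L_1$ problem on a difference-constraint system, i.e.\ the dual of a min-cost flow on a graph with $O(n)$ arcs. Dual min-cost flow (min-cost tension) with convex piecewise-linear costs whose total number of breakpoints is $O(n)$ can be solved by $O(n)$ iterations of a shortest-path or parametric step, each taking $O(n\log n)$. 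This yields $O(n^2\log n)$. Concretely, I will follow the approach for isotonic regression on general DAGs via successive partitioning: repeatedly solve binary problems (is $z_i$ above or below a threshold $\theta$), each of which is a minimum closure / min cut. Because the constraint graph has the special banded structure, each such cut can be found in $O(n\log n)$ by a sweep, with $O(n)$ thresholds taken from the candidate values $x_i+j$, $j\in\{-n,\dots,n\}$.

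The main obstacle I expect is this last step. I need to show, first, that the optimal values lie in a candidate set of size $O(n^2)$ which is searched with only $O(n)$ cut computations, and second, that each cut is computable in $O(n\log n)$ on the banded graph rather than by general max-flow. I will first try to prove that partitioning cuts are nested intervals in each chain, so that each cut reduces to a one-dimensional sweep.
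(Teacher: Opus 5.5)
\paragraph{The case $k=1$.} Your argument is correct and gives $O(n\log n)$, but by a different route from the paper. The update $\min_{z-1\le z'\le z}F_i(z')$ does keep the decreasing part of $F_i$ and shift the increasing part right by $1$. Inserting $|x_{i+1}-z|$ is the usual two-heap step, with the lazy offset absorbing the shift. Clamping a stored minimiser $m_i$ of $F_i$ into $[z_{i+1}-1,z_{i+1}]$ then recovers an optimal solution.

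The paper sidesteps the two-sided constraint you found awkward by dropping $z_i\le z_{i+1}$ altogether. With only $z_{i+1}-z_i\le 1$, the substitution $z_i\mapsto i-z_i$ yields plain $L_1$ isotonic regression on $a_i=i-c(I_i)$, for which it cites an $O(n\log n)$ algorithm. Dropping the lower bound is harmless for $k=1$. If an optimum had $z_{i+1}<z_i$, then lowering $z_i$ slightly and raising $z_{i+1}$ slightly would both be feasible. Optimality then forces $x_{i+1}\le z_{i+1}<z_i\le x_i$, contradicting sortedness. Your version is self-contained; the paper's is shorter because it delegates the work to a known algorithm.

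Your justification that monotonicity is free because sorting preserves the window condition is false. For example, $z=(0,10,0,10)$ with $k=2$ satisfies even $|z_{i+2}-z_i|\le 1$, yet after sorting $s_3-s_1=10$. Drop that argument and rely on the order-preservation property, as you already do.

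\paragraph{The case $k>1$.} Here the proposal has a genuine gap: it contains no algorithm with a proven $O(n^2\log n)$ bound. The two facts you defer are the entire content: that $O(n)$ cut computations suffice, and that each cut is an $O(n\log n)$ sweep. Moreover, the partitioning route is doubtful. Threshold partitioning for isotonic regression relies on all constraints being pure order constraints $z_i\le z_j$. Only then is ``is $z_i\ge\theta$?'' an independent closure problem for each $\theta$, with nested optimal cuts. Here $z_{i+k}\le z_i+1$ ties level $\theta$ of $z_{i+k}$ to level $\theta-1$ of $z_i$. The cycle $z_i\le z_{i+1}\le\cdots\le z_{i+k}\le z_i+1$ has total offset $1$, and shifts $z_i\mapsto z_i-p_i$ preserve cycle offsets. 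So no reindexing, your residue chains included, turns the system into pure order constraints. Likewise, your claim that min-cost tension with $O(n)$ breakpoints needs only $O(n)$ shortest-path iterations is not a general fact: the number of augmentations is governed by the slopes, i.e.\ the capacities.

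The missing idea is the one you mention in passing and then abandon. Write the problem as the LP with $y_i\ge\pm(z_i-x_i)$ and dualise. The dual is a min-cost circulation on $n+2$ vertices and $4n-k$ arcs:
\begin{itemize}
\item $s\to v_i$ with cost $x_i$, and $v_i\to t$ with cost $-x_i$;
\item $v_{i+1}\to v_i$ with cost $0$, and $v_i\to v_{i+k}$ with cost $1$;
\item $t\to s$ with cost $0$.
\end{itemize}
Because the objective has slopes $\pm1$, the only capacitated arcs are those at $s$ and $t$, each with capacity $1$. Every cycle among the $v_i$ must use an arc $v_i\to v_{i+k}$, so it has positive cost. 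Hence the flow value is at most $n$, and successive shortest paths with Dijkstra and potentials needs at most $n$ augmentations of $O(n\log n)$ each. An optimal primal solution is then recovered via complementary slackness as a system of $O(n)$ difference constraints, solved by Bellman--Ford in $O(n^2)$ time (or read off from the final potentials).
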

\begin{proof}
For $k=1$, it turns out that the formulation of~\cite{HonoratoDroguett2024} reduces \gged[{\kconnected[1]}] to \textsc{Isotonic Regression} as follows.
Consider $d_1,\ldots,d_n$ and set $z_i = -(c(I_i) +d_i -i)$.
Then the constraint $(c(I_{i+1}) + d_{i+1}) - (c(I_i) +d_{i}) \le 1$ becomes $z_{i} \le z_{i+1}$ since $c(I_i) + d_i = -(z_i-i)$.
Moreover, $d_i$ is obtained by $-c(I_i) - z_i +i$.
Thus, we set $a_i = -c(I_i) + i$ and solve \textsc{Isotonic Regression} for $a_1,\ldots,a_n$.
This takes $O(n\log n)$ time~\cite{irdp-r-sosa19}, which
dominates the running time for \gged[{\kconnected[1]}].

For $k\geq 2$, we again use \textsc{Isotonic Regression}, adapting the program of~\cite{HonoratoDroguett2024}, and obtain the following linear program to choose final centres $x_i = c(I_i) + d_i$.
Recall that the order-preserving property allows us to assume $x_1\le\cdots\le x_n$.
\begin{mini!}
{}{\sum_{i=1}^n y_i\tag{$\textsf{P}$}}{}{}\label{eq:lp_k_connected}
\addConstraint{-x_i+y_i}{\ge -c(I_i)}{\quad\forall i \in [n]}\tag{$\textsf{P}.1$}\label{eq:lp_k_connected_1}
\addConstraint{x_i+y_i}{\ge c(I_i)}{\quad\forall i \in [n]}\tag{$\textsf{P}.2$}\label{eq:lp_k_connected_2}
\addConstraint{x_{i+1}-x_i}{\ge 0}{\quad\forall i \in [n-1]}\tag{$\textsf{P}.3$}\label{eq:lp_k_connected_3}
\addConstraint{x_i-x_{i+k}}{\ge -1}{\quad\forall i \in [n-k]}\tag{$\textsf{P}.4$}\label{eq:lp_k_connected_4}
\addConstraint{x_i}{\in\mathbb{R}}{\quad\forall i \in [n]}\tag{$\textsf{P}.5$}\label{eq:lp_k_connected_5}
\end{mini!}

Note that the first two constraints and the objective force $y_i=\abs{x_i-c(I_i)}$.

We note that~\eqref{eq:lp_k_connected} is a special case of the general model of Jewell~\cite{Jewell1975} who formulated \textsc{Isotonic Regression} as a linear program and represented its dual as a min-cost circulation (without giving an explicit algorithm).
The authors of~\cite{Angelov2006} adapted this approach to weighted \textsc{Isotonic Regression} under 
$m$ partial-order constraints and gave an algorithm running in $O(nm+n^2\log n)$ time.
In our case, we have constraint~\eqref{eq:lp_k_connected_4} in addition to the constraints of \textsc{Isotonic Regression},
which is not captured by the model of~\cite{Angelov2006} for $k \geq 2$.
In the following, we adapt the approach in~\cite{Angelov2006} to also capture~\eqref{eq:lp_k_connected_4} (for the unweighted case).
We derive the dual and its min-cost circulation explicitly, resulting in a network with $n+2$ vertices and $4n-k$ arcs.
This gives us the necessary tools to prove the $O(n^2\log n)$ running time and to recover the final centres.

To this end, let $\alpha,\beta \in \mathbb{R}^{n}_{\ge 0}$,
$\lambda \in \mathbb{R}^{n-1}_{\ge 0}$, and $\mu \in \mathbb{R}^{n-k}_{\ge 0}$ be the dual variables of the constraints
of~\eqref{eq:lp_k_connected_1},~\eqref{eq:lp_k_connected_2},~\eqref{eq:lp_k_connected_3} and~\eqref{eq:lp_k_connected_4}, respectively.
Dualising~\eqref{eq:lp_k_connected} and negating its objective yields:
\begin{mini!}
{}{\sum_{i=1}^n c(I_i)(\alpha_i-\beta_i)+\sum_{i=1}^{n-k}\mu_i\tag{$\textsf{D}$}}{}{}\label{eq:dual_k_connected}
\addConstraint{-\alpha_i+\beta_i-\lambda_i+\lambda_{i-1}-\mu_{i-k}+\mu_i}{=0}{\quad\forall i\in[n]}\tag{$\textsf{D}.1$}\label{eq:dual_k_connected_1}
\addConstraint{\alpha_i+\beta_i}{\le 1}{\quad\forall i\in[n]}\tag{$\textsf{D}.2$}\label{eq:dual_k_connected_2}
\addConstraint{\alpha,\beta,\lambda,\mu}{\ge 0}{}\tag{$\textsf{D}.3$}\label{eq:dual_k_connected_3}
\end{mini!}
where $\lambda_i=0$ for $i\notin[n-1]$ and $\mu_i=0$ for $i\notin[n-k]$.
Note that~\eqref{eq:dual_k_connected_1} is a flow conservation constraint interpreting
$\alpha_i$, $\lambda_i$, and $\mu_{i-k}$ as flow entering a vertex $v_i$ and $\beta_i$, $\lambda_{i-1}$, and $\mu_i$ as flow leaving $v_i$.
Also, the flow corresponding to $\lambda_i$ and $\lambda_{i-1}$ does not
appear in the objective function.
Since $(s, v_i)$ and $(v_i, t)$ have opposite costs and $(t, s)$ has zero cost, we can suitably shift an optimal flow satisfying~\eqref{eq:dual_k_connected_1} to satisfy~\eqref{eq:dual_k_connected_2}.

Using this interpretation, we construct a flow network $N_k$ with vertices $\{v_1,\dots,v_n\}\cup\{s,t\}$, writing $p(e)$ and $u(e)$ for the cost and capacity of an arc~$e$.
Since $\alpha_i$ and $\beta_i$ appear only in the $i$th constraint, they
correspond to arcs $(s, v_i)$ and $(v_i, t)$, respectively.
Similarly, we add arcs $(v_{i+1},v_i)$ and $(v_i,v_{i+k})$ corresponding to $\lambda_i$ and $\mu_i$, respectively.
Lastly, we add the arc $(t, s)$.
We call the arcs among $v_1,\ldots,v_n$ the \emph{constraint arcs}.
All arc costs are given by the objective of~\eqref{eq:dual_k_connected}.
Finally, arcs $(s, v_i)$ and $(v_i, t)$ receive a capacity of 1, while all other arcs remain unlimited.
The capacities of $(s, v_i)$ and $(v_i, t)$ correspond to the upper bound given by~\eqref{eq:dual_k_connected_2}, whereas $\lambda_i$ and $\mu_i$ remain unlimited because these    variables are unbounded.
\Cref{fig:nk_skeleton} illustrates~$N_k$.

\begin{figure}[bt]
    \centering
    \includegraphics[scale=1,page=3]{gged_kclique_kconn.pdf}
    \caption{ Network~$N_k$, with arc labels $p(e)/u(e)$.
        The bold path $s\to v_i\to v_{i+k}\to t$ has cost
        $c(I_i)+1-c(I_{i+k})$, which is negative exactly when
        $c(I_{i+k})-c(I_i)>1$.
    }
    \label{fig:nk_skeleton}
\end{figure}


\begin{restatable}[\restateref{claim:circ_equal_dual_opt}]{claim}{circequaldualopt}
    \label{claim:circ_equal_dual_opt}
    The cost of an adjusted min-cost circulation of $N_k$ is equal to the optimal value of~\eqref{eq:dual_k_connected}.
\end{restatable}

By \Cref{claim:circ_equal_dual_opt}, a min-cost circulation in~$N_k$ yields an optimal $(\alpha,\beta,\lambda,\mu)$.
Assume that we have such a solution.
Complementary slackness states that a positive dual (primal) variable forces the corresponding primal (dual) constraint to be tight.
In particular, $y_i = 0$ or $\alpha_i + \beta_i -1 = 0$.
%
First, if $\alpha_i+\beta_i=1$, the flow adjustment implies $(\alpha_i,\beta_i)\in\{(1,0),(0,1)\}$.
For $(\alpha_i,\beta_i) = (1,0)$, primal complementary slackness yields $y_i + c(I_i) = x_i$.
Now, $x_i \ge c(I_i)$ holds since $y_i\ge 0$.
For $(\alpha_i,\beta_i) = (0,1)$, primal complementary slackness yields $x_i+y_i = c(I_i)$ and hence $x_i \le c(I_i)$.
If $\alpha_i + \beta_i <1$, dual complementary slackness yields $y_i = 0$ and the first two primal constraints imply $x_i=c(I_i)$.
Combining these conditions with~\eqref{eq:lp_k_connected} yields the following \emph{system of difference constraints}:
\begin{equation*}
    \begin{array}{
            r@{\;}c@{\;}l@{\quad}l@{\qquad}
            r@{\;}c@{\;}l@{\quad}l
        }
        x_i-x_{i+1} & \le                           & 0,  & i\in[n-1]
                    & x_0-x_i                       & \le & -c(I_i)
                    & \text{if }\alpha_i=1,                                        \\
        x_{i+k}-x_i & \le                           & 1,  & i\in[n-k]
                    & x_i-x_0                       & \le & c(I_i)
                    & \text{if }\beta_i=1,                                         \\
        x_i-x_{i+1} & =                             & 0   & \text{if }\lambda_i>0,
                    & x_i-x_0                       & \le & c(I_i)
                    & \text{if }\alpha_i+\beta_i<1,                                \\
        x_{i+k}-x_i & =                             & 1   & \text{if }\mu_i>0,
                    & x_0-x_i                       & \le & -c(I_i)
                    & \text{if }\alpha_i+\beta_i<1,
    \end{array}
\end{equation*}
where $x_0$ is a dummy variable to transform (in)equalities of the form $x_i \le b$.
The above system has $O(n)$ constraints, and we can use the Bellman--Ford algorithm to solve the system in $O(n^2)$ time~\cite{cormen2009}.
Since the system is invariant under translation~\cite[Lem.~24.8]{cormen2009}, we translate a solution $x=(x_0,\ldots,x_n)$ by $-x_0$.
This makes $x_0=0$, so constraints $x_i - x_0 \le c(I_i)$ imply $x_i \le c(I_i)$ (as well for the other constraints).
Lastly, we set $y_i = \abs{x_i - c(I_i)}$ for every $i\in [n]$.

The obtained solution $(x,y)$ 
is feasible in~\eqref{eq:lp_k_connected} since $y_i = \abs{x_i - c(I_i)}$ ensures~\eqref{eq:lp_k_connected_1},~\eqref{eq:lp_k_connected_2} for all $i\in [n]$ and the system includes~\eqref{eq:lp_k_connected_3} and~\eqref{eq:lp_k_connected_4}.
Moreover, it satisfies complementary slackness with the optimal dual solution, hence strong duality implies its optimality for~\eqref{eq:lp_k_connected}.
It remains to compute a min-cost circulation in $N_k$. We adapt the ideas of Angelov, Harb, Kannan, and Wang~\cite{Angelov2006} and show the following.

\begin{restatable}[\restateref{claim:min_cost_circulation}]{claim}{mincostcirculation}
    \label{claim:min_cost_circulation}
    A min-cost circulation in $N_k$ can be computed in $O(n^2\log n)$ time.
\end{restatable}

Combining \Cref{claim:min_cost_circulation} with the $O(n^2)$-time solution of the difference constraints yields an $O(n^2\log n)$-time algorithm for \gged[{\kconnected}] when $k\ge2$.
\end{proof}

\section{Connectivity for Intervals and Weighted Unit Disks is NP-hard}
\label{sec:hardness}

In this section,
we show that \gged[{\kconnected[1]}] is \NP-hard via a reduction from {\threepartition}, adapting the reduction used to prove that
\gged[\edgeless] is strongly \NP-hard on unweighted intervals~\cite{HonoratoDroguett2026}, where $\edgeless$ is the class of graphs with no edges.
A similar reduction for interval coverage appears in~\cite{Czyzowicz2010}.
In the following, we sketch our reductions; full proofs are in
Appendix~\ref{sec_apx:hardness_proofs}.

\begin{restatable}[\restateref{thm:snph_intv}]{theorem}{snphintv}
    \label{thm:snph_intv}
    \gged[{\kconnected[1]}] is strongly \NP-hard on intervals.
\end{restatable}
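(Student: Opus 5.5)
We reduce from \threepartition, which is strongly \NP-hard: given $3m$ positive integers $a_1,\dots,a_{3m}$ with $\sum_j a_j = mB$ and $B/4<a_j<B/2$, decide whether they can be split into $m$ triples each summing to $B$. Since all numbers are polynomially bounded in unary, the constructed instance only has to use coordinates and lengths polynomial in $m$ and $B$. The instance consists of unweighted intervals of two kinds. First, a rigid \emph{frame}: $m+1$ separator blocks placed at fixed positions with gaps of length exactly $B$ between consecutive blocks. Each block is made very expensive to move by stacking many (say $M=\mathrm{poly}(m,B)$) identical copies of one interval; since the problem is unweighted, multiplicity replaces weights. Second, for every $a_j$ an \emph{item} interval of length $a_j$, placed far away from the frame, for instance at distance $L$ to the right of the last block. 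The items must then be moved into the frame so that the union is connected.

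The threshold is chosen as follows. If a $3$-partition exists, we move each triple into one gap so that its three intervals tile the gap of length $B$ exactly, touching each other and the neighbouring blocks. The intersection graph is then connected, because closed intervals that touch intersect. The total cost is a fixed value $C^\ast=\sum_j |\text{target}_j-\text{start}_j|$. To make this value independent of which triple fills which gap, I would put the items at distance $L$ with $L\gg mB$ and argue via an exchange argument on the cost. Concretely, the cost can be written as $3mL$ plus a term depending only on the multiset of final positions. A cleaner alternative is to place the items symmetrically on both sides, or to use the known \gged[\edgeless] reduction of \cite{HonoratoDroguett2026} with the direction reversed: there the items have to be spread out, here they have to fill gaps.

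For the converse, assume a solution of cost at most $C^\ast$ exists. We argue the following steps in order. (i) No frame block moves at all: moving one block costs at least $M$ times its displacement, and any displacement small enough to be cheap still leaves the gap lengths essentially intact. Integrality can be enforced by scaling all lengths by a large factor. (ii) Every gap of length $B$ has to be covered completely by item intervals for the blocks to become connected, since there is nothing else between two consecutive blocks. (iii) Each item has to travel at least distance about $L$ into the frame, and the total budget leaves no slack for an item to remain outside or to cover two gaps. This forces the items to cover exactly disjoint unions of gaps. Because the total item length is $mB$, which equals the total gap length, the coverage has to be exact, so every gap is covered by items of total length exactly $B$. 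With $B/4<a_j<B/2$, each gap then receives exactly three items, which yields a $3$-partition.

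The main obstacle is calibrating the cost threshold. In a solution with cost at most $C^\ast$, the distances of the items to their target positions differ depending on which gap they end up in. Hence $C^\ast$ has to be computed for a canonical placement, and we must show that every connected configuration costs at least $C^\ast$. To do this, I would first try to make the item-dependent part of the cost negligible, for example by letting all items start stacked at one far point so that each item's cost is roughly $L$ plus its offset. Alternatively, I would argue by an exchange or sorting argument that moving the intervals so that they tile the frame contiguously from left to right costs the same no matter how the items are permuted, up to a quantity controlled by $\sum_j a_j$. If this proves delicate, a fallback is to place the far items so that the budget exactly equals the minimum cost of tiling the region $[0,mB]$ with the items in any order, and to prove a matching lower bound.
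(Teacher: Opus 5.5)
There is a genuine gap, and it lies exactly where you place your ``main obstacle''. Stacking $M$ copies of one interval does not make a block rigid in the sense you need. It only makes moving \emph{all} copies expensive, whereas in the unweighted problem every copy moves independently at unit cost per unit distance. A single copy shifted by $\delta$ (at most its length) still meets the block and extends its coverage by $\delta$ into the adjacent gap, at cost $\delta$. So step (i) does not exclude cheap frame moves, and the claim in (ii) that only items can lie between consecutive blocks is false. The counting in (iii) then breaks: in a no-instance the items can over-cover one gap and leave an integer deficiency of $1$ in another, which one copy patches at cost about $1$.

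To exclude this, your threshold would have to be accurate to within about $1$ of the best tiling cost. That cost, however, varies by up to $\Theta(B)$ per gap with the triple and its order. For example, for items whose left endpoints all start at a far point $X$, filling $[s,s+B]$ in the order $a,b,c$ costs $3(X-s)-2a-b$. Increasing the far distance does not remove these differences, and the proposal supplies neither such a threshold nor a matching lower bound. Making the copies very short would make patches expensive, but then a block is shorter than an item, and items can straddle it and serve two gaps. You never fix the block length, and ``no item covers two gaps'' must be enforced geometrically (separators longer than $\max_j a_j$), not by the budget.

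The missing idea is to build each separator as a long \emph{chain} of tiny touching intervals, which gives both properties at once. In the paper (whose $L$ is your $B$), the inner components have total length $B/2>\max_j a_j$ but consist of $(8T+1)(\lfloor T\rfloor+1)$ intervals of length at most $1/(8T+1)$. The outer components extend more than $T$ beyond the gap region, so their far ends cannot reach it within the budget, which forces every gap to be covered. Suppose the items meeting some gap have total length at most $B-1$ (by integrality, this is what failure means). Then, after discarding a margin of $1/4$ at each end of the gap, a part of length at least $1/2$ must be covered by frame intervals that each moved at least $1/4$. This needs at least $(8T+1)/2$ of them and costs more than $T$.

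Because this penalty exceeds the \emph{entire} budget, the threshold only needs to upper-bound the cost of one yes-solution. The items start with their right endpoints at the right end of $\mathcal{C}_0$, and $3a+2b+c<3B$ gives a cost below $T=3m(3mB+B)/4$. No exact calibration, exchange argument, or lower bound on item movement is required. The converse is pure covering and counting: each gap receives its own items of total length at least $B$, the total is $mB$, hence each gap gets exactly $B$ and three items.
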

\begin{proof}[Proof sketch]
    We reduce from {\threepartition} (see~\Cref{fig:redexample_intv1conn}).
    \begin{figure}[tb]
        \centering
        \includegraphics[scale=1,page=4]{gged_kclique_kconn.pdf}
        \caption{\gged[{\kconnected[1]}] on intervals of arbitrary length: Sketch of the reduction.
        }
        \label{fig:redexample_intv1conn}
    \end{figure}
    Given a multiset $A=\set{a_1,\ldots,a_{3m}}$ of positive integers and a bound $L$ such that $\sum_{i=1}^{3m}a_i=mL$ and $L/4<a_i<L/2$, {\threepartition} asks whether $A$ can be partitioned into $m$ triples, each summing to $L$.
    We construct an instance containing an interval $I_i$ of length $a_i$ for each $a_i$, and long chains of short unweighted intervals \(\mathcal C_0,\ldots,\mathcal C_m\) (components) costly to move.
    The components are arranged such that the gaps between them need to be filled by $I_1,\ldots,I_{3m}$.
    We show this is possible with cost less than $T=3m(3mL+L)/4$ if and only if $A$ has the required partition.
\end{proof}

We consider the special case of weighted intervals in which lengths equal weights.
We show that \gged[{\kconnected[1]}] is weakly \NP-hard on these instances.
Afterwards, we show that the reduction can be adapted to prove the same result for \gged[\edgeless].
This case was stated as open by the authors of~\cite{HonoratoDroguett2026}.
The two reductions use the same construction idea and the same cost bound.
For connectivity, we assume that intervals are closed, whereas for independence we assume that intervals are open.

We reduce both results from {\partition}.
Let $A=\{a_1,\ldots,a_n\}$ be an instance of {\partition} with $\sum_{a\in A}a=2L$.
For every $i\in[n]$, we introduce an interval $I_i$ with $c(I_i)=0$ and $\operatorname{len}(I_i)=w_i=a_i$. We also introduce a {\em divider} interval $I_c$ with $c(I_c)=0$ and $w_c=\operatorname{len}(I_c)=4L+1$, and set $T=w_cL+L^2$.
For a solution, let $S_r$ be the length of the union of the intervals placed to the right of $I_c$.
For the left side, we have $S_\ell=2L-S_r$. The skeleton is illustrated in~\Cref{fig:partition_example}.

\begin{figure}[b]
    \centering
    \includegraphics[page=5]{gged_kclique_kconn.pdf}%
    \caption{Skeleton of both reductions from {\partition}. The red interval is $I_c$, and its movement range under cost $T$ is highlighted in light red ($T/w_c= L+L^2/(4L+1)$). Both sides of $I_c$ show a subset of grey intervals with total length $L$. The intervals $I_\ell$ and $I_r$ are added only in the reduction for \gged[{\kconnected[1]}].}
    \label{fig:partition_example}
\end{figure}

\begin{restatable}[\restateref{lem:wequall_both_sides_cost}]{lemma}{wequallbothsidescost}
    \label{lem:wequall_both_sides_cost}
    Suppose $I_c$ is moved to point $y$ and contains the origin.
    If the intervals on each side are placed consecutively with no gaps against $I_c$, then the total weighted moving distance of $\I$ is at least $T+(S_r-L)^2$.
\end{restatable}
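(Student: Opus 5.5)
The plan is to compute the total cost exactly as a function of the displacement $y$ of $I_c$ and of $S_r$, and then bound it from below. Since $I_c$ still contains the origin after the move, its new centre $y$ satisfies $|y|\le w_c/2$, and $I_c$ contributes $w_c|y|$. By assumption the intervals to the right of $I_c$ form a gapless block that starts at the right endpoint $y+w_c/2$ of $I_c$. The left intervals form a gapless block ending at the left endpoint $y-w_c/2$. Because the gray intervals do not overlap within a block, the union length on each side is just the sum of the lengths, so $S_r+S_\ell=2L$.

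\textbf{Right-hand side.} Fix any order $I_{j_1},\dots,I_{j_r}$ of the right intervals. Let $P_t=\sum_{s<t}a_{j_s}$ be the prefix sums of their lengths. The centre of $I_{j_t}$ lies at $y+w_c/2+P_t+a_{j_t}/2$, which is positive because $y\ge -w_c/2$. Since every gray interval starts centred at $0$ and has weight $a_{j_t}$, its cost is $a_{j_t}(y+w_c/2+P_t+a_{j_t}/2)$. Summing over $t$ and using the identity $\sum_t a_{j_t}(P_t+a_{j_t}/2)=\tfrac12\bigl(\sum_t a_{j_t}\bigr)^2$, the right-hand cost is $S_r y + w_cS_r/2 + S_r^2/2$. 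In particular, it does not depend on the order.

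\textbf{Left-hand side.} By symmetry, the left-hand cost is $-S_\ell y + w_cS_\ell/2 + S_\ell^2/2$.

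\textbf{Total cost.} Adding the three contributions and using $S_\ell+S_r=2L$ gives
\[
w_c|y| + (S_r-S_\ell)y + w_cL + \tfrac12\bigl(S_r^2+S_\ell^2\bigr).
\]
The identity $\tfrac12\bigl(S_r^2+(2L-S_r)^2\bigr)=L^2+(S_r-L)^2$ turns the last two terms into $T+(S_r-L)^2$.

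\textbf{Remaining terms.} It suffices to show $w_c|y|+(S_r-S_\ell)y\ge 0$. This holds because $|S_r-S_\ell|\le 2L<4L+1=w_c$, so $w_c|y|+(S_r-S_\ell)y\ge (w_c-|S_r-S_\ell|)\,|y|\ge 0$. The lemma follows.

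The only step needing care is the algebraic identity for the gapless block, i.e. that the sum of the weighted centre positions equals half the square of the block length. It follows from $\bigl(\sum_t a_t\bigr)^2=\sum_t a_t^2+2\sum_t a_tP_t$. Another point to watch is that the movement of every gray interval has a definite sign, so the absolute values can be dropped; this is exactly where the hypothesis that $I_c$ contains the origin is used.
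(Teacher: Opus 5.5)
Your proposal is correct and matches the paper's proof essentially step for step. The paper takes the per-side costs $(w_c/2+y)S_r+S_r^2/2$ and $(w_c/2-y)S_\ell+S_\ell^2/2$ from its one-side lemma, which is the same prefix-sum computation you carry out inline; it then expands the total to $w_c|y|+w_cL+L^2+(S_r-L)^2+2(S_r-L)y$ and concludes with $w_c|y|+2(S_r-L)y\ge (w_c-2L)|y|\ge 0$, exactly as you do.
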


\begin{restatable}[\restateref{thm:wequall_conn_nphard}]{theorem}{wequallconnnphard}
    \label{thm:wequall_conn_nphard}
    \hspace*{-1ex} \gged[{\kconnected[1]}] is weakly \NP-hard on intervals $\I$ even if (i) $G_\I$ has three connected components, (ii) the length of each interval is equal to its weight and (iii) all lengths and weights are positive integers.
\end{restatable}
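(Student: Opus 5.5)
We reduce from \partition, using the skeleton already described and \Cref{lem:wequall_both_sides_cost}. The instance is the $n$ intervals $I_i$ with $c(I_i)=0$ and $\len{I_i}=w_i=a_i$, the divider $I_c$ with $w_c=\len{I_c}=4L+1$, and two extra heavy intervals $I_\ell$ and $I_r$. These are placed symmetrically at distance exactly $L$ from the endpoints of $I_c$: $I_\ell$ ends at $-(4L+1)/2-L$ and $I_r$ starts at $(4L+1)/2+L$. Their weights are large integers, say $W=T+1$, so that moving them is never worthwhile. All $I_i$ and $I_c$ share the centre $0$, so they form one component, and $I_\ell$, $I_r$ are separate. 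Hence $G_\I$ has three components, every length equals its weight, and all values are positive integers after scaling by~2 if needed. We show that there is a solution of cost at most $T=w_cL+L^2$ if and only if $A$ has a partition.

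\textbf{($\Rightarrow$ of the instance).} Let $A=A_\ell\cup A_r$ be a partition with both sums $L$. Keep $I_c$, $I_\ell$ and $I_r$ fixed. Place the intervals of $A_r$ consecutively, starting at the right endpoint of $I_c$; they cover the gap of length $L$ exactly and touch $I_r$, which suffices because the intervals are closed. Do the same on the left with $A_\ell$. The cost of the right side is $\sum_{i\in A_r} a_i\cdot(\text{distance moved})$. Each interval is displaced by $(4L+1)/2$ plus the total length of its predecessors plus half its own length. Summing gives $(4L+1)L/2+\tfrac12 L^2$, using $\sum_{i<j}a_ia_j+\tfrac12\sum a_i^2=\tfrac12 L^2$. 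Adding both sides yields exactly $w_cL+L^2=T$.

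\textbf{($\Leftarrow$).} Take a solution of cost at most $T$. Since $w_\ell,w_r>T$, the intervals $I_\ell$ and $I_r$ move by less than $1$. Because $I_c$ has weight $4L+1$, it moves by at most $T/w_c<L+1/4$, as shown in \Cref{fig:partition_example}. We may also assume that $I_c$ still contains the origin, by a symmetry/exchange argument or by bounding its displacement against its half-length $2L+1/2$. Then the gap between $I_c$ and $I_r$, and the gap between $I_\ell$ and $I_c$, must both be bridged by intervals $I_i$. The union lengths satisfy $S_\ell+S_r\le 2L$, and each gap, shrunk by the movements of $I_c$, $I_\ell$ and $I_r$, must be at most the covering length on its side.

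Next we normalise: pushing the intervals on each side consecutively against $I_c$ without gaps does not increase the cost. This is an exchange argument, since sliding an interval toward the origin only shortens its movement, and coverage is preserved once the set on each side is fixed. Then \Cref{lem:wequall_both_sides_cost} gives a cost of at least $T+(S_r-L)^2$ when $I_c$ is counted at its actual position $y$. The lemma's bound should absorb the $w_c|y|$ term against the reduced gap. Hence $S_r=L$. Since $S_r$ is a sum of a subset of $A$ (no overlaps remain after normalisation, and overlaps would waste length that is not available), we obtain a partition.

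\textbf{Main obstacle.} The main obstacle is the $\Leftarrow$ direction's bookkeeping. We must justify that the no-gap normalisation preserves feasibility of the connectivity. We must also handle a shifted $I_c$, where shifting $I_c$ right by $y$ shortens the right gap but lengthens the left one. Both have to be shown to be dominated by the lemma's bound, and integrality is needed to turn $(S_r-L)^2<1$ into $S_r=L$. I would first verify the lemma's bound carefully for $y\neq 0$, and fall back on the integrality of all $a_i$ together with the $1/4$ slack if needed. Weak hardness follows because \partition is only weakly \NP-hard and the numbers stay polynomial in $L$.
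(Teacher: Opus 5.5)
Your construction and the ($\Rightarrow$) direction match the paper's (the paper gives $I_\ell,I_r$ weight and length $2T+1$, but $T+1$ also works). The converse has a genuine gap at the normalisation step, ``pushing the intervals on each side consecutively against $I_c$ does not increase the cost''.

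Feasibility is not the issue: a consecutive chain whose length is at least the gap, started at $I_c$, still reaches $I_r$. The cost claim, however, is false. Your weights only give $\abs{d_\ell}+\abs{d_r}<1$, not $d_\ell=d_r=0$, so $I_\ell$ and $I_r$ may move inward. Then the right gap $L-y+d_r$ can be shorter than $S_r$. The cheapest cover is a chain anchored at $I_r$ that overlaps $I_c$ by $\lambda_r=S_r-(L-y+d_r)$, and similarly $\lambda_\ell$ on the left. Sliding such a chain to start at the endpoint of $I_c$ moves every interval \emph{away} from the origin. So the actual cost can undercut the configuration of \Cref{lem:wequall_both_sides_cost} by $S_\ell\lambda_\ell+S_r\lambda_r$. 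This saving can be close to $2L$, so neither integrality (which only gives $(S_r-L)^2\ge 1$ when $S_r\neq L$) nor the $1/4$ slack can absorb it. The term you single out, $w_c\abs{y}$, is not the obstacle; the lemma already handles it via $w_c>2L$.

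The missing idea is to charge this saving to $I_\ell$ and $I_r$. Once all items are known to be used, so that $S_\ell+S_r=2L$, one gets $\lambda_\ell+\lambda_r=d_\ell-d_r\le\abs{d_\ell}+\abs{d_r}$. With $\lambda_\ell,\lambda_r\ge 0$ this gives
\[
\onenorm{w\cdot D}\ \ge\ T+(S_r-L)^2-2L(\abs{d_\ell}+\abs{d_r})+W(\abs{d_\ell}+\abs{d_r})\ \ge\ T+(S_r-L)^2,
\]
because $W>2L$. So the heavy weight is needed precisely to beat $2L$ per unit of inward movement of $I_\ell,I_r$, not merely to make moving them ``never worthwhile''.

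Relatedly, you only record $S_\ell+S_r\le 2L$, but the lemma needs every item to lie on one of the two sides, i.e.\ $S_\ell=2L-S_r$. This follows because the two gaps have total length $2L-d_\ell+d_r>2L-1$, no item can span $I_c$ (length $4L+1$), and all $a_i$ are integers. You should state this explicitly before invoking \Cref{lem:wequall_both_sides_cost}.
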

\begin{proof}[Proof Sketch]
    We add two intervals $I_\ell$ and $I_r$, each of length and weight $2T+1$, one on each side of $I_c$ with a gap of length $L$, as shown in \Cref{fig:partition_example}.
    We move the intervals corresponding to the elements of $A$ to close the two gaps.
    A valid partition yields a distance vector $D$ such that $G_{\I+D}\in\kconnected[1]$ and $\onenorm{w\cdot D}=T$.
    In the other direction, let $D$ be a distance vector such that $G_{\I+D}\in\kconnected[1]$ and $\onenorm{w\cdot D}\le T$.
    Any connected solution of cost at most $T$ must use $I_1,\ldots,I_n$ to form two chains between $I_\ell$, $I_c$, and $I_r$.
    We derive the cost of forming these chains with \Cref{lem:wequall_both_sides_cost}.
    If intervals are placed consecutively against $I_c$, the two chains can be shifted towards the origin, maintaining connectivity and decreasing the cost by at most $2L$ times the total distance of $I_\ell$ and $I_r$ (these intervals can certainly be moved to reduce the gaps).
    However, since $I_\ell$ and $I_r$ have weight $2T+1>2L$, their movement compensates the possible decrease.
    Hence $\onenorm{w\cdot D}\ge T+(S_r-L)^2$
    and $S_r=L$ must hold since $\onenorm{w \cdot D}\le T$.
    Therefore the intervals on either side correspond to a valid partition.
    Lastly, the initial intersection graph has three connected components.
\end{proof}

\begin{restatable}[\restateref{thm:wequall_edgeless_nphard}]{theorem}{wequalledgelessnphard}
    \label{thm:wequall_edgeless_nphard}
    \hspace*{-1ex} \gged[{\edgeless}] is weakly \NP-hard on open intervals $\I$ even if (i) $G_\I$ is complete, (ii) the length of each interval is equal to its weight and (iii) all lengths and weights are positive integers.
\end{restatable}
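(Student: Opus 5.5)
We reduce from \partition, reusing the skeleton of \Cref{fig:partition_example}: intervals $I_1,\dots,I_n$ with $c(I_i)=0$ and $\len{I_i}=w_i=a_i$, the divider $I_c$ with $c(I_c)=0$ and $w_c=\len{I_c}=4L+1$, and the bound $T=w_cL+L^2$. We add no extra intervals $I_\ell,I_r$. All intervals are open and share the origin, so $G_\I$ is complete; lengths equal weights and are positive integers. This establishes (i)--(iii), and the reduction is clearly polynomial. Since \partition is weakly \NP-hard, it remains to show that $A$ has a partition if and only if there is $D$ with $G_{\I+D}\in\edgeless$ and $\onenorm{w\cdot D}\le T$.

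\emph{From a partition to a solution.} Let $A_r\cup A_\ell$ be a partition with both parts summing to $L$. Keep $I_c$ fixed, so it occupies $(-(4L+1)/2,(4L+1)/2)$. Pack the intervals of $A_r$ consecutively to the right of $I_c$, and those of $A_\ell$ consecutively to its left, touching at endpoints. Openness makes touching intervals disjoint. We then compute the cost directly, or equivalently invoke the construction underlying \Cref{lem:wequall_both_sides_cost} with $S_r=L$. An interval of weight $a$ whose centre is moved to distance $(4L+1)/2+P+a/2$ costs $a((4L+1)/2+P+a/2)$. Here $P$ is the total length placed before it on its side. Summing over one side with total length $L$ gives $\sum a(4L+1)/2+\sum a(P+a/2)=L(4L+1)/2+L^2/2$, since $\sum_j a_j(P_j+a_j/2)=(\sum a_j)^2/2$ for a consecutive packing. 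Both sides together therefore give exactly $w_cL+L^2=T$.

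\emph{From a solution to a partition.} Take an optimal edgeless solution of cost at most $T$. First we normalise it. By the preliminaries, some movement vector is zero. Intervals are open and the objective is linear in each displacement, so we can argue that all pieces are packed without gaps: sliding an interval toward its original centre until it touches a neighbour never increases cost. Hence the intervals on each side of $I_c$ form a consecutive block touching $I_c$. If $I_c$ is moved to $y$, the origin must still lie in $I_c$. Otherwise $I_c$ alone pays more than $w_c(4L+1)/2>T$. So the hypotheses of \Cref{lem:wequall_both_sides_cost} hold, and the cost is at least $T+(S_r-L)^2$. Because cost $\le T$, we get $(S_r-L)^2\le 0$, so $S_r=L$. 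The intervals to the right of $I_c$ then form a subset of $A$ summing to $L$.

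\emph{Main obstacle.} The hardest step is the normalisation. We must justify that some optimal solution leaves no gaps within each side and places the block flush against $I_c$, so that \Cref{lem:wequall_both_sides_cost} applies. Compacting one side toward the origin is safe only if $I_c$ is placed sensibly. I would argue by an exchange argument. Fix the left-to-right order of the intervals, which makes the cost a convex piecewise-linear function of the positions under non-overlap constraints. At an optimum, every maximal block containing no origin-crossing must be tight, since otherwise shifting it toward the origin decreases cost. The weight $w_c=4L+1$ exceeds the total weight $2L$ of all other intervals. Hence any block adjacent to $I_c$ is better off pushing its own intervals than displacing $I_c$, and the origin stays inside $I_c$. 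Integrality of the data then ensures the strict inequality $(S_r-L)^2\ge1$ whenever $S_r\ne L$, which completes the argument.
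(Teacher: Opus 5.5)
Your proposal is correct and follows the paper's proof essentially step for step. It uses the same construction without $I_\ell,I_r$, the same consecutive-packing cost computation for the forward direction, and for the converse the same chain: the origin lies in $I_c+d_c$, the items are compacted toward the origin, and \Cref{lem:wequall_both_sides_cost} forces $S_r=L$.

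The normalisation you flag as the main obstacle is easier than you fear. Once $I_c+d_c$ contains the origin, every item to its right has a positive centre, so sliding it left until flush only lowers its cost for any fixed $y$. The lemma already handles arbitrary $y$ and needs no integrality, so neither the exchange argument nor the appeal to optimality is required.
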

\begin{proof}[Proof Sketch]
    We use the same construction (without $I_\ell$ and $I_r$) and assume that the intervals are open.
    Given a valid partition, we obtain a distance vector $D$ by moving the corresponding intervals consecutively on the two sides of $I_c$.
    Since the intervals are open, $G_{\I+D}\in\edgeless$, and a direct calculation yields $\onenorm{w\cdot D}=T$.
    In the other direction, let $D$ be a distance vector such that $G_{\I+D}\in\edgeless$ and $\onenorm{w\cdot D}\le T$.
    The intervals $I_1,\ldots,I_n$ are completely moved to the left or right of $I_c$.
    Moving these intervals towards the origin decreases the cost, so we may place them consecutively against $I_c$.
    Therefore by \Cref{lem:wequall_both_sides_cost} we obtain a cost of at least $T+(S_r-L)^2$ and $S_r=L$ since $\onenorm{w\cdot D}\le T$.
    Finally, all intervals contain the origin by construction, so their intersection graph is complete.
\end{proof}
Since every complete graph has exactly one maximal clique, \Cref{thm:wequall_edgeless_nphard} directly implies the following result.
\begin{corollary}
    \label{cor:edgeless_maximal_cliques_paranphard}
    \gged[\edgeless] on weighted open intervals is para-\NP-hard parameterised by the number $k$ of maximal cliques of $G_\I$ even when $k=1$.
\end{corollary}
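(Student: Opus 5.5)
The corollary follows directly from \Cref{thm:wequall_edgeless_nphard}. Recall that a parameterised problem is para-\NP-hard with respect to a parameter~$k$ if it is \NP-hard already for some fixed constant value of~$k$. So it suffices to exhibit a constant $k_0$ such that \gged[\edgeless] on weighted open intervals restricted to instances with exactly $k_0$ maximal cliques is \NP-hard. We take $k_0=1$.

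We reuse the reduction from {\partition} behind \Cref{thm:wequall_edgeless_nphard}. Every instance it produces consists of open intervals $I_1,\dots,I_n$ and the divider $I_c$, all centred at the origin with positive length. Every interval therefore contains the origin, so any two intersect and $G_\I$ is complete, which is item~(i) of the theorem.

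A complete graph on vertex set~$V$ has exactly one maximal clique, namely $V$ itself. Indeed, $V$ is a clique, and every clique is contained in~$V$, so no other clique is maximal. Hence every instance produced by the reduction has parameter value $k=1$. The reduction runs in polynomial time and preserves yes/no answers by \Cref{thm:wequall_edgeless_nphard}. It is therefore a polynomial-time reduction from the (weakly) \NP-hard problem {\partition} to the slice $k=1$ of \gged[\edgeless] on weighted open intervals.

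It follows that this slice is \NP-hard, so the parameterised problem cannot be in \XP{} unless $\P=\NP$, which is exactly para-\NP-hardness. The only thing to check is that para-\NP-hardness is meant in the standard sense of hardness for a single constant parameter value; weak (rather than strong) \NP-hardness of that slice is enough for this. There is no real obstacle: the content lies entirely in \Cref{thm:wequall_edgeless_nphard}.
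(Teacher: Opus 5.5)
Your proof is correct and follows the paper's argument: the reduction behind \Cref{thm:wequall_edgeless_nphard} always produces a complete intersection graph, which has exactly one maximal clique, so the slice $k=1$ is already \NP-hard. One small wording point: ruling out membership in XP unless $\mathsf{P}=\mathsf{NP}$ is a consequence of para-\NP-hardness rather than its definition, but the definition you actually rely on, \NP-hardness of a single fixed parameter slice, is the right one.
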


Finally, we show that \gged[{\kconnected[1]}] on weighted unit disks is strongly
\NP-hard by reducing from {\planarmonothreesat}.

\begin{restatable}[\restateref{thm:snph_wdisk}]{theorem}{snphwdisk}
    \label{thm:snph_wdisk}
    \gged[{\kconnected[1]}] is strongly \NP-hard even when restricted to collections $\D$ of weighted unit disks for which (i) $G_\D$ is a forest, (ii) centres have integer coordinates and (iii) the number of distinct weights is two.
\end{restatable}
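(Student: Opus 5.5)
We reduce from \planarmonothreesat, which is strongly \NP-hard even when the variable--clause incidence graph, together with a path through all variable vertices, has a planar rectilinear embedding on a polynomial-size integer grid. Variables sit on a horizontal line, positive clauses above it and negative clauses below it. The reduction uses two weights: a \emph{heavy} weight $W$, polynomially large (for instance $W>T$ for the cost bound $T$ chosen below), and a \emph{light} weight~$1$. We scale the grid by a constant factor so that every gadget consists of unit disks with integer centres.

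\textbf{Gadgets.} Heavy disks form the rigid skeleton; because $W>T$, no heavy disk can move in a solution of cost at most~$T$. Light disks are the movable parts.
\begin{itemize}
\item \emph{Variable gadget.} For each variable $x_j$ we place a heavy chain with a single gap that must be filled. The gap can be closed by exactly one of two light disks, $p_j$ or $n_j$, and each needs movement exactly~$1$ to do so. Sliding $p_j$ into the gap means ``$x_j$ true''; sliding $n_j$ in means ``false''.
\item \emph{Clause gadget.} Each clause is a heavy chain that is disconnected from the rest of the construction. Three wires lead to it, one from each of its literals.
\item \emph{Wires.} A wire is a sequence of heavy chains separated by gaps of integral width. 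Each gap is bridged by a light disk that is initially displaced by one unit.
\end{itemize}
The intended behaviour of a wire is as follows. If $p_j$ is \emph{not} used in the variable gap, it is free to shift into the first wire gap. That frees the next light disk to shift, and so on, so that the literal's wire carries a ``push'' to the clause and connects the clause component. The final budget is $T=$ (number of variables) $+$ (total number of light disks in the wires), so every light disk may move by exactly one unit.

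To make the initial graph $G_\D$ a forest, we make all chains paths and keep every light disk initially adjacent to at most one heavy component. Integer-coordinate placement in $\mathbb{Z}^2$ is straightforward after scaling, because unit disks at distance exactly~$1$ touch.

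\textbf{Correctness.} For the forward direction, given a satisfying assignment we move the chosen literal disks into the variable gaps. For each clause we pick one true literal and propagate along that literal's wire into the clause. All other wires stay connected through their own light disks, each moved by one unit in the direction that links it to the skeleton. This gives total cost exactly~$T$.

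For the converse:
\begin{enumerate}
\item Cost at most $T<W$ means no heavy disk moves.
\item Each light disk must move by at least one unit, since every light disk starts isolated from the skeleton at distance~$1$ and the graph must become connected.
\item Since the budget is exactly $T$, every light disk moves by exactly one unit, and only in one of finitely many useful directions.
\end{enumerate}
From this we read off an assignment and check that each clause component is reached through a true literal.

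\textbf{Main obstacle.} The hardest step is the exactness argument: showing that no light disk can serve both the variable gap and a wire gap, and that diagonal or intermediate placements cannot connect two components more cheaply. My plan is to design the gaps so that each light disk has exactly two distance-$1$ positions that connect it to anything, and to make all other components at Euclidean distance at least $2$ from these positions. A disk that moves by only one unit can then touch only the intended neighbours, so the argument becomes a finite local case check.
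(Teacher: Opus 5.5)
\textbf{Fan-out is missing.} Your variable gadget has a single disk $p_j$ and a single disk $n_j$, but you lay one wire per literal occurrence, and a variable may occur with the same sign in many clauses. Within your own design this cannot work.
\begin{itemize}
\item The free literal disk has only two useful positions: the variable gap, or the first gap of \emph{one} wire.
\item Every wire disk fills exactly one gap. So a wire with $r$ own disks and $r+1$ gaps joins its variable end to its clause end only if it receives one extra disk from the literal disk.
\item Hence, once the variable gap is forced to be filled, each variable feeds at most one clause through a wire.
\end{itemize}
For example, $(x_1\lor x_2)\land(x_1\lor x_3)\land\overline{x_2}\land\overline{x_3}$ is satisfiable, yet its image is a no-instance. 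Indeed, any formula with more clauses than variables is mapped to a no-instance, so your forward direction fails. A junction disk bridging three chains would let a push branch, but it destroys the one-disk-one-gap count your converse relies on.

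The paper avoids propagating signals through light disks altogether:
\begin{itemize}
\item $\X_i$ has one light disk per occurrence ($\kappa+1$ per sign).
\item These light disks alternate around a cycle with $2\kappa+2$ pendant heavy components.
\item Each clause is a rigid heavy claw whose arm ends next to the position that the occurrence's own light disk reaches under the corresponding rotation.
\end{itemize}

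\textbf{Your exactness step is false as stated.} Your step~2 (every light disk must move at least one unit) cannot hold whenever a unit move bridges two components. Suppose a light disk starting at $c$ is moved to $b$ with $\lVert b-c\rVert_2=1$, and there it intersects disks with centres $h_1,h_2$ from different heavy components. Then $\lVert c-h_i\rVert_2\le 2$ for both $i$. Equality in both cases would force $h_1=h_2=2b-c$, which is impossible. So some heavy disk can be touched after a move shorter than~$1$. Consequently step~3 does not follow, and the saved budget can be spent on unintended moves.

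The lower bound has to be counted on the demand side instead. You need to exhibit components that each require their own light disk, each at cost at least the intended move, and set the budget to their number times that cost. This is exactly the paper's argument:
\begin{itemize}
\item Each pendant component of $\X_i$ can only be reached by one of its two cyclic neighbours, at cost~$3$.
\item The budget is $3m'$, so every light disk moves exactly~$3$.
\item The cyclic arrangement then leaves only the two rotations.
\end{itemize}

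Two smaller issues remain.
\begin{itemize}
\item Your variable gap is forced to be filled only if the chain behind it is pendant. If the gap lies on the backbone path through the variables, the two halves can reconnect via clause components. Both $p_j$ and $n_j$ could then push, which breaks soundness.
\item Your budget (number of variables plus number of wire disks) does not equal the number of light disks, each of which you require to move by exactly one unit.
\end{itemize}
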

\begin{proof}[Proof Sketch]
    We reduce from {\planarmonothreesat}.
    Using the rectilinear planar embedding of the given instance, we construct for each variable $x_i$ a gadget $\X_i$ consisting of a skeleton of heavy disks and several light disks.
    A (counter)clockwise rotation is a movement in which every light disk of $\X_i$ moves by a distance of $3$. Either rotation connects the whole variable gadget.
    We interpret the two rotations as the truth assignment of $x_i$. \Cref{fig:variablegadget_example_conf} illustrates the rotation corresponding to the true assignment.
    The value $\kappa$ is the maximum number of clauses containing $x_i$ as a positive or negative literal.

    \begin{figure}[pbt]
        \centering
        \includegraphics[scale=1,page=8]{gged_kclique_kconn.pdf}
        \caption{Variable gadget $\X_i$ for $x_i$. Each positive or negative occurrence of $x_i$ in a clause has a corresponding light disk.
            The black shaded disks form the skeleton for $\X_i$ and the dashed grey disk indicates a slot for a negative clause not being used.
            The arrows and bold light-red disks show the rotation encoding the assignment to true of $x_i$, which connects $\X_i$ to every clause containing $x_i$ positively.
        }
        \label{fig:variablegadget_example_conf}
    \end{figure}

    The variable gadgets are connected by chains of heavy disks. Each clause is represented by a claw-shaped component of heavy disks, almost connected to the corresponding variable gadgets following the embedding. A clause component is connected to a variable gadget exactly when the rotation satisfies the corresponding literal.

    Let $m$ be the total number of light disks.
    We give each light disk weight~$1$, every heavy disk weight~$W=12m$ and set the cost threshold to~$3m$.
    A satisfying assignment yields a connected intersection graph by applying the corresponding rotation to every $\X_i$, for a total cost of $3m$.
    In the other direction, any connected solution of cost at most~$3m$ cannot significantly move a heavy disk and the light disks connect each variable gadget only by one of the two rotations.
    These rotations define a truth assignment satisfying every clause, since each clause component must connect to a variable gadget.

    The construction has polynomial size, its initial intersection graph is a forest, all disk centres are integral, and the only weights used are $1$ and $W$.
\end{proof}

\section{Open Problems and Further Research}
\label{sec:open}

In this paper, we have shown new lower and upper bounds on the
(time) complexity of several graph editing problems for intersection
graphs of intervals and weighted unit disks.  Many variants of these
problems, however, are still open.
\begin{itemize}
    \item Can we solve \gged[{\kconnected}] on unit intervals in $O(n^2)$ time when $k\ge 2$?
    \item \gged[{\kconnected[1]}] is \NP-hard on
          weighted unit disks (with only two different weights).
          The authors of~\cite{Anari2016} presented an $O(n)$-factor approximation algorithm for the unweighted case.
          We conjecture this case remains \NP-hard.
    \item What is the complexity of
          \gged[{\kconnected[1]}] on weighted unit intervals?
    \item What is the complexity of \gged[{\kconnected[1]}] on a closed cycle?
    \item Is \gged[{\kconnected[1]}] {\FPT}
          on unweighted unit disks parameterised by the number of disks moved or on
          disks parameterised by the number of different radii?
    \item What is the complexity of \gged[\kclique] on unit disks?
\end{itemize}

\bibliographystyle{splncs04}
\bibliography{bibliography.bib}

\newpage
\appendix

\section{Omitted Material for \texorpdfstring{\gged[{\kclique}]}{GGED(k-clique)}}

\kcliquegivenklevel* \label{claim:k_clique_given_k_level*}

\begin{proof}
    \renewcommand{\qedsymbol}{$\triangle$}
    We describe an algorithm that computes the optimal point~$x^\star$ and a set $S^\star$
    of indices of intervals that need to be moved to~$x^\star$ such that the resulting intersection graph contains a $k$-clique.
    The algorithm considers the x-coordinates of vertices of~$L_k$ and the endpoints of the intervals as event points and traverses them from left to right. It maintains
    a candidate set of indices of intervals~$S$ that need to be moved, and its objective function $\sum_{i \in S} f_i$
    using two variables~$m$ and~$t$ that store the sum of slopes and the sum of y-intercepts of the functions~$f_i$ for $i\in S$, respectively.
    Note that~$x^\star$ may only change at event points
    since between two consecutive event points, the functions indexed by $S$ are fixed and thus our objective~$\sum_{i \in S} f_i$ is linear.
    Further, note that~$S$ can also only change at the x-coordinate of a vertex of~$L_k$ by the definition of~$L_k$, and
    we can assume that $x^\star$ is between the leftmost endpoint and the rightmost endpoint in $\I$.
    Additionally, $m$ and $t$ may only change if a new function is added to~$S$ or an endpoint of an interval is reached.
    Consequently, it suffices to process the event points in the range spanned by the leftmost and rightmost endpoints of $\I$.

    To this end, let~$a$ be the leftmost endpoint in~$\I$.
    We initialise the candidate set~$S$ by evaluating~$f_i$ at~$a$ for all~$i \in [n]$ to determine
    $S_a$ and set $S = S_a$. Using these indices, we initialise~$m$ and~$t$.
    Lastly, we set~$x^\star$ to~$a$.

    If an event point~$v$ corresponds to an endpoint of an interval~$I_j$ with $j\in S$, we update~$m$ and~$t$:
    If~$v$ is a left endpoint, we remove the negative slope~$-w_j$ from~$m$. Otherwise,~$v$ is a right endpoint and we add~$w_j$ to~$m$. Thus, in both cases, we set $m=m+w_j$.
    We can update~$t$ in a similar fashion by setting $t=t-w_ja_j$ and $t=t-w_jb_j$ for left and right endpoints, respectively.
    Finally, we evaluate the objective function and update~$x^\star$ if necessary.

    If an event point $x$ is the x-coordinate of a vertex~$v_i$ of~$L_k$ to the right of~$a$, we do the following.
    We consider the (directed) line~$v_{i-1}v_i$ and the vertex~$v_{i+1}$ of three consecutive vertices in $L_k$. 
    We distinguish between two cases depending on whether~$v_{i+1}$ is to the left or to the right of~$v_{i-1}v_i$.
    If~$v_{i+1}$ is to the left of~$v_{i-1}v_i$, the cost function that was previously one of the~$k-1$ cheapest functions is now the $k$th cheapest function.
    In this case,~$S$ does not change, hence we only need to evaluate the function at $x$ using~$m$ and~$t$, and update~$x^\star$ if a new minimum is found.
    If~$v_{i+1}$ is to the right of~$v_{i-1}v_i$, then the $k$th cheapest function~$f_\ell$ is exchanged with a new function~$f_j$.
    Therefore, we remove~$\ell$ from~$S$ and add~$j$.
    Similarly, we subtract the current slope of~$f_\ell$ from $m$ and add the current slope of~$f_j$. We update~$t$ analogously using the current $y$-intercepts.
    Finally, we evaluate the total cost of the~$k$ cheapest functions using the updated slope~$m$ and y-intercept~$t$, and update~$x^\star$ if necessary.

    After processing all event points,~$x^\star$ is the optimal position to move intervals of~$\I$ such that the resulting intersection graph contains a $k$-clique.
    After the sweep, we recover $S^\star$ in $O(n)$ time by selecting the indices of the $k$ smallest values $f_i(x^\star)$.

    Observe that~$S$ can be implemented using a binary array of size~$n$ where the $i$th entry is true if and only if interval~$I_i$ is currently in~$S$, making the implementation of the algorithm possible in constant time per iteration.
    Therefore, the total runtime is~$O(|L_k| + n)$.
\end{proof}

\section{Omitted Material for \texorpdfstring{\gged[{\kconnected}]}{GGED(k-conn)}}

\circequaldualopt*\label{claim:circ_equal_dual_opt*}
\begin{proof}
    \renewcommand{\qedsymbol}{$\triangle$}
    Let $f$ be a minimum cost circulation for $N_k$.
    According to our construction, we set $\alpha_i = f(s,v_i)$, $\beta_i = f(v_i,t)$, $\lambda_i = f(v_{i+1},v_{i})$ and $\mu_i = f(v_i,v_{i+k})$ which satisfy ~\eqref{eq:dual_k_connected_1} due to the flow conservation of $f$.
    However, \eqref{eq:dual_k_connected_2} might not be satisfied if
    $f(s,v_i) > 0$ and $f(v_i,t)>0$. In this case,
    let $\delta_i = \min{\{\alpha_i,\beta_i\}}$.
    By removing $\delta_i$ from $f(s,v_i)$, $f(v_i,t)$ and $f(t,s)$, we obtain a circulation with the same cost satisfying~\eqref{eq:dual_k_connected_2} since $p(s,v_i)+p(v_i,t)+p(t,s)=c(I_i)-c(I_i)+0=0$.
    We redefine $\alpha_i$ and $\beta_i$ using the corresponding reduced values.
    Therefore, the resulting variables are feasible for~\eqref{eq:dual_k_connected} and their objective value is $p(f)$.

    Conversely, let $(\alpha,\beta,\lambda,\mu)$ be an optimal solution of~\eqref{eq:dual_k_connected}. We assign the corresponding values to the arcs of $N_k$ by setting $f(s,v_i)=\alpha_i$, $f(v_i,t)=\beta_i$, $f(v_{i+1},v_i)=\lambda_i$ and
    $f(v_i,v_{i+k})=\mu_i$.
    By construction, flow is conserved at every vertex $v_i$.
    Moreover,~\eqref{eq:dual_k_connected_2} ensures that capacities are not exceeded, and \eqref{eq:dual_k_connected_3} ensures that the flow is nonnegative.
    The total flow from $s$ is $\sum_{i=1}^n \alpha_i$ and the total flow to $t$ is $\sum_{i=1}^n \beta_i$.
    Summing up~\eqref{eq:dual_k_connected_1}, we obtain:
    \begin{align*}
        0 = \sum_{i=1}^n-\alpha_i+\beta_i+\sum_{i=1}^n(\lambda_{i-1}-\lambda_i)+ \sum_{i=1}^n(\mu_i-\mu_{i-k})
        = -\sum_{i=1}^n \alpha_i+ \sum_{i=1}^n \beta_i,
    \end{align*}
    due to telescoping sums.
    Thus, we set $f(t,s)= \sum_{i=1}^n \alpha_i = \sum_{i=1}^n \beta_i$ and $f$ is a circulation of $N_k$.
    Lastly, due to our construction of $N_k$, the cost of $f$ corresponds to the objective function value of~\eqref{eq:dual_k_connected}.
\end{proof}

\mincostcirculation*\label{claim:min_cost_circulation*}
\begin{proof}
    \renewcommand{\qedsymbol}{$\triangle$}
    Lastly, we show how to calculate the circulation $f$.
    We modify~$N_k$ so that a circulation is interpreted as an $s$--$t$ flow.
    First, note that any flow in $N_k$ satisfies $\sum_{i=1}^n f(s,v_i) \le n$ and any flow going through the cycles connecting $v_1,\ldots,v_n$ has positive cost.
    Thus, the value of an optimal circulation is bounded by $n$ and we safely set all infinite capacities to $n$ since removing cycles in $v_1,\ldots,v_n$ decreases the cost.
    Any circulation in $N_k$ is completed at the arc $(t,s)$.
    We remove $(t,s)$ and add the arc $(s,t)$ with cost $p(s,t)=0$ and capacity $u(s,t)=n$.
    Let $N'_k$ be the modified version of $N_k$.
    A circulation is equivalent to an $s$--$t$ flow of value $n$ in the modified network.
    Assume that we have $f(t,s) = q$.
    This yields an $s$--$t$ flow of value $q$ in $N'_k$.
    Sending $n-q$ units of flow through $(s,t)$ results in a flow of total value $n$ and the same cost.
    Conversely, assume $f'$ is an $s$--$t$ flow of value $n$ in $N'_k$ with $f'(s,t) =n - q$.
    Then assigning $f(t,s) = q$ and removing $(s,t)$ produces a circulation of the same cost in $N_k$.
    Consequently, a min-cost circulation in $N_k$ is equivalent to a min-cost $s$--$t$ flow of value $n$ in $N'_k$.

    Angelov, Harb, Kannan, and Wang~\cite{Angelov2006} argue that translating arc costs to make them nonnegative allows an $s$--$t$ flow to be calculated in $O(mn+n^2\log n)$ time
    by applying Dijkstra's algorithm at most $2n+1$ times, where $m$ is the number of arcs.
    For our network, each (indirect) augmenting path $s$--$t$ saturates a pair of arcs $(s,v_i)$ and $(v_j,t)$ since we have unit capacities.
    Moreover, there are no negative cost cycles.
    Thus we have at most $n$ augmentations, so we apply Dijkstra's algorithm at most $n$ times.
    This yields a running time of $O(n^2\log n)$ for $N'_k$.
\end{proof}

\section{Omitted Material for the NP-hardness Proofs}
\label{sec_apx:hardness_proofs}
This section gives all the details of the \NP-hardness results of the paper.
\subsection{\texorpdfstring{\gged[{\kconnected[1]}]}{GGED(conn)} is strongly NP-hard on Intervals of Arbitrary Length}

We show that \gged[{\kconnected[1]}] is \NP-hard via a reduction from {\threepartition}.
We adapt the reduction used to prove that
\gged[\edgeless] is strongly \NP-hard on unweighted intervals~\cite{HonoratoDroguett2026}, where $\edgeless$ is the class of graphs with no edges.
A similar reduction for interval coverage appears in~\cite{Czyzowicz2010}.
\defdecproblem{\threepartition}%
{A multiset $A=\{a_1,\dots,a_{3m}\}$ of positive integers and a bound
    $L\in \mathbb{Z}^+$ such that $\sum_{i \in [3m]} a_i = mL$ and, for
    every $i \in [3m]$, it holds that $L/4 <a_i < L/2$.}%
{Decide whether $A$ can be partitioned into $m$ multisets
    $A_1,\ldots,A_m$ of size~3 such that, for every $j \in [m]$, the
    elements in~$A_j$ sum up to~$L$.}

\snphintv*\label{thm:snph_intv*}
\begin{proof}
    Given an instance $(A,L)$ of {\threepartition}, we set $T = 3m(3mL+L)/4$.
    We construct an instance~\I of \gged[{\kconnected[1]}] and show that $(A,L)$ admits a valid partition
    if and only if there is a distance vector~$D$ such that $G_{\I+D} \in \kconnected[1]$ and $\|D\| < T$.
    We construct an instance $\I$ consisting of two tuples $\A$ and $\C$ as follows.
    \Cref{fig:redexample_intv1conn} illustrates the reduction.
    \begin{itemize}
        \item The tuple $\A = (I_i)_{i\in [3m]}$ contains, for each
              $i \in [3m]$ and $a_i \in A$, an interval~$I_i$ with
              $\len{I_i} = a_i$ and $c(I_i) = -a_i/2$.
        \item $\C$ consists of the \emph{component} tuples $\C_0,\ldots,\C_m$ of $\I$, where
              \begin{itemize}
                  \item[--] for $i\in [m-1]$, $\C_i = (B^i_1,\ldots, B^i_j)$ is a tuple of intervals, an \emph{inner component}, such that $j = (8T+1)(\lfloor T\rfloor+1)$ and $\len{B} = L/(2j)$ for every $B \in \C_1\cup\cdots\cup \C_{m-1}$. For $i \in [m-1] $ and $a \in [j]$, we set $c(B^i_a) = (3i-1)L/2 + (2a-1)L/(4j)$.
                  \item[--] for $i \in \set{0,m}$, $\C_i = (B^i_1,\ldots, B^i_j)$ is a tuple of intervals, an \emph{outer component}, such that $j = (8T+1)(\lfloor T\rfloor+1+ \max A)$ and $\len{B} = 1/(8T+1)$ for every $B \in \C_0\cup \C_m$. We set $c(B^0_a) = (2a-2j-1)/(2(8T+1))$ and $c(B^m_a) = (3m-1)L/2 + (2a-1)/(2(8T+1))$, for $a \in [j]$.
              \end{itemize}
    \end{itemize}

    We call the empty spaces between components of $\I$ \emph{gaps}.
    Each component~$\C_i$ is a path in the intersection graph.
    For $i\in [m-1]$, the union of~$\C_i$ has total length~$L/2$.
    Intuitively, we want to ensure that the intervals in~$\A$ are used to bridge the gaps of~$\C$.
    Given the above instance $\I$ for an instance $(A,L)$ of {\threepartition}, we show that using
    the intervals in~$\A$ to make~$G_\I$ connected is not too costly.
    \begin{claim}\label{claim:tmd_1conn_intv}
        If $(A, L)$ is a yes-instance of \threepartition, then
        the total moving distance required to bridge every gap with three intervals from $\A$ is less than $T$.
    \end{claim}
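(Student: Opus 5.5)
The plan is to give an explicit placement of the intervals of $\A$ for a valid partition $A_1,\dots,A_m$. Then I bound its cost term by term and compare the total with $T=3m(3mL+L)/4$. The components stay fixed throughout.

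First I would pin down the gaps. The rightmost interval of $\C_0$ is $B^0_j$, with centre $-1/(2(8T+1))$ and length $1/(8T+1)$, so $\C_0$ ends exactly at~$0$. For $i\in[m-1]$, the inner component $\C_i$ occupies $[(3i-1)L/2,\,3iL/2]$, since its $j$ intervals of length $L/(2j)$ are placed consecutively. The outer component $\C_m$ starts at $(3m-1)L/2$. Hence there are exactly $m$ gaps. For $g\in[m]$ the $g$th gap is $[s_g,s_g+L]$ with $s_g=3(g-1)L/2$. Initially each $I_i\in\A$ is $[-a_i,0]$, because $c(I_i)=-a_i/2$ and $\len{I_i}=a_i$.

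Given the partition, I assign $A_g=\{a,b,c\}$ to gap $g$. The three intervals are placed consecutively, touching each other, as $[s_g,s_g+a]$, $[s_g+a,s_g+a+b]$ and $[s_g+a+b,s_g+L]$. Because the intervals are closed and $a+b+c=L$, this chain touches $\C_{g-1}$ at $s_g$ and $\C_g$ at $s_g+L$. Consequently every gap is bridged and $G_{\I+D}$ is connected.

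The cost of moving $[-x,0]$ to $[p,p+x]$ is $p+x$, the new right endpoint. The cost for gap $g$ is therefore
\[(s_g+a)+(s_g+a+b)+(s_g+L)=3s_g+L+a+(a+b).\]
Since $c>0$, we have $a<a+b<L$, so this cost is strictly less than $3s_g+3L$. Summing over $g$ gives a total strictly below
\[\sum_{g=1}^m 3\bigl(3(g-1)L/2+L\bigr)=\tfrac{9}{4}m(m-1)L+3mL=\tfrac{9}{4}m^2L+\tfrac34 mL=T.\]

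The only delicate point is strictness: the crude bound of at most $s_g+L$ per interval gives exactly $T$. The strict inequality $\|D\|<T$ must therefore come from the positivity of the elements of $A$, as in the displayed estimate. A further check is that the closed-interval convention makes touching endpoints count as intersections, so the chain really connects adjacent components.
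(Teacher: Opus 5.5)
Your proposal is correct and follows the paper's argument: your per-gap cost $3s_g+L+a+(a+b)$ is exactly the paper's alignment cost $3\cdot 3(\ell-1)L/2$ plus its filling cost $3a_i+2a_j+a_k$, and summing the strict bound $3s_g+3L$ over the $m$ gaps gives $T$. The differences are minor: you get strictness from $c>0$ where the paper uses $a_i<L/2$, and you additionally check the gap positions and connectivity via touching closed intervals, which the paper leaves implicit.
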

    \begin{proof}
        \renewcommand{\qedsymbol}{$\triangle$}
        Let $I_i,I_j,I_k$ be intervals in $\A$ corresponding to elements $a_i,a_j,a_k$ in $A$ such that $\len{I_i}+\len{I_j}+\len{I_k} = L$.
        Assume we move intervals $I_i,I_j,I_k$ to bridge the gap between $\C_{\ell-1}$ and $\C_\ell$.
        The total moving distance for the three intervals is divided into the moving distance to align the right endpoints of $I_i, I_j, I_k$ with the rightmost endpoint of $\C_{\ell-1}$ and the moving distance of $I_i, I_j, I_k$ to fill the gap.
        The first part requires a total moving distance of $3(3L/2)(\ell-1)$.
        Without loss of generality, we fill the gap from left to right by moving first $I_i$, then $I_j$, and finally $I_k$.
        The total moving distance of this movement is $3a_i+2a_j+a_k < 6L/2$, since every element in $A$ is smaller than $L/2$.
        For the $m$ gaps, we obtain:
        \begin{align*}
            \sum_{\ell\in [m]} & \left(\frac{9L(\ell-1)}{2}\right) + \frac{6mL}{2} = \frac{9L}{2}\left(\frac{m(m+1)}{2}-m\right) + \frac{6mL}{2} \\
                               & = \frac{9L}{2}\left(\frac{m^2-m}{2}\right) +\frac{6mL}{2} = \frac{9m^2L+3mL}{4} = T.
        \end{align*}
        This concludes the proof.
    \end{proof}
    Using this transformed instance we can decide \threepartition.
    \begin{claim}\label{claim:1_connected_arbitrary_length_equivalence}
        A pair $(A,L)$ is a yes-instance of {\threepartition} if and only if there is a distance vector $D$ such that $G_{\I+D} \in \kconnected[1]$ and $\|D\| < T$.
    \end{claim}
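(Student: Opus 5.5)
The forward direction is exactly \Cref{claim:tmd_1conn_intv}. Given a valid partition $A_1,\dots,A_m$, assign the triple $A_\ell$ to the gap between $\C_{\ell-1}$ and $\C_\ell$. Leave every component interval in place, and place the three intervals of $A_\ell$ consecutively so that they exactly fill the gap. Each gap has length $L$ and each triple has total length $L$, so this closes every gap. Because the placements are consecutive and the intervals are closed, the resulting graph is connected. The claim's calculation shows the total movement is strictly less than $T$, since $3a_i+2a_j+a_k<6L/2$ holds strictly.

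For the reverse direction, let $D$ satisfy $G_{\I+D}\in\kconnected[1]$ and $\|D\|<T$. I would proceed in three steps.

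\textbf{Step 1: the components stay essentially rigid.} Each inner component has $j=(8T+1)(\lfloor T\rfloor+1)$ intervals. If every one of them moved by at least $1/(8T+1)$, the cost would already exceed $T$. Hence some interval in each component moves by less than $1/(8T+1)$, and a similar counting bounds how much the whole component can drift. The outer components are longer still, by $\max A$ extra blocks, so that they cannot absorb the role of an $\A$-interval. I expect the conclusion to be that each gap shrinks by less than about $1/4$ in total. Since all $a_i$ are integers, this will let me round away the slack.

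\textbf{Step 2: the $\A$-intervals must fill the gaps.} There are $m$ gaps, each of length $L$ up to the small slack, and they are separated by components. Connectivity forces the union of the $\A$-intervals placed in gap $\ell$ to cover it, so their lengths must sum to at least $L-\text{slack}$. Integrality then upgrades this to a sum of at least $L$. An $\A$-interval has length less than $L/2<L$, so it can help cover at most one gap; it cannot straddle an inner component of length $L/2$ and still reach across. The total length available is $mL$, so every gap receives intervals of total length exactly $L$. The bounds $L/4<a_i<L/2$ then force exactly three intervals per gap, which gives the partition.

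The main obstacle is Step 1. A careful averaging argument is needed to show that the component intervals cannot cheaply shift or spread to narrow a gap, and in particular that an outer component cannot slide inward to replace an $\A$-interval. I would first attack this with the counting argument: moving a component as a whole by $\delta$ costs about $j\delta>T$ unless $\delta<1/(8T+1)$. Spreading intervals apart within a component creates new gaps inside it, which would then also need covering, so the component has no cheap way to stretch either.
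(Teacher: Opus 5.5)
Your forward direction and the final counting in Step~2 agree with the paper: an $\A$-interval meets at most one gap, the total length is $mL$, integrality applies, and $L/4<a_i<L/2$ forces triples. The gap is Step~1. It carries the whole difficulty of the reverse direction, you only state it as an expectation, and the route you sketch would not prove it.

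Components are not rigid, because their intervals move independently. Knowing that a rigid shift by $\delta$ costs $j\delta$, or that some interval of each component moves by less than $1/(8T+1)$, tells you nothing about moving only a few boundary intervals into a gap. Your reason why stretching is expensive (``the new gaps inside it would then also need covering'') is not a cost bound either. Such holes can be covered by $\A$-intervals or by other component intervals. Once holes exist, the shrunken ``effective gap'' on which Step~2 relies is no longer well defined.

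The missing idea is to keep the \emph{original} gaps fixed and bound directly, via displacement, how much of them component intervals can cover. First, connectivity only says that the union of all moved intervals is a single interval. Each outer component is longer than $T$, so at cost $<T$ its far end stays outside the gaps, and the union therefore contains every original gap. That is what the extra outer length is for, rather than stopping an outer component from ``absorbing the role of an $\A$-interval''. Your Step~2 sentence that the $\A$-intervals must cover the gap silently assumes this.

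Next, suppose the $\A$-intervals meeting gap $\ell$ had total length at most $L-1$. Then component intervals would have to cover at least $1/2$ of the part of the gap at distance at least $1/4$ from both its ends. Every component interval meeting that part was originally outside the gap, so it has moved by at least $1/4$, and it has length at most $1/(8T+1)$. Hence at least $(8T+1)/2$ of them are needed, at total cost at least $(8T+1)/8>T$. So each gap receives $\A$-length at least $L$, no rigidity statement is needed, and your Step~2 counting then finishes the proof.
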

    \begin{proof}
        \renewcommand{\qedsymbol}{$\triangle$}
        Assume that $(A,L)$ is a yes-instance of {\threepartition}, and let
        $A_1,\dots,A_m$ be a partition of $A$ such that $\sum_{a\in A_i} a = L$ for every $i\in [m]$.
        Let $\I_1,\ldots,\I_m$ be the partition of $\A$ corresponding to $A_1,\dots,A_m$.
        For each $i \in [m]$, we fill the gap between $\C_{i-1}$ and $\C_i$ using $\I_i$.
        Since $\sum_{I\in \I_i} \len{I} = L$ for every $i\in [m]$, it follows that $\C_{i-1}$ and $\C_i$ are connected.
        Moreover, the total moving distance is less than $T$ by \Cref{claim:tmd_1conn_intv}.
        Hence, there is a distance vector~$D$ such that $G_{\I+D} \in \kconnected[1]$ and $\|D\| < T$.

        Now assume that there is a distance vector~$D$ such that $G_{\I+D} \in \kconnected[1]$ and $\|D\| < T$.
        For each $i\in[m]$, let $\A_i$ be the intervals of $\A$ that, after movement, cover part of the original gap between $\C_{i-1}$ and $\C_i$.
        Since consecutive gaps are separated by $L/2$ and every interval of $\A$ has length less than $L/2$, no interval appears in two such sets.
        Hence the sets $\A_i$ are pairwise disjoint.
        We show that $\sum_{I\in \A_i}\len{I} \ge L$ for every $i\in[m]$.
        Suppose to the contrary that $\sum_{I\in \A_i}\len{I} < L$ for some $i\in[m]$.
        Since all lengths of intervals in~$\A$ are positive integers, we have $\sum_{I\in \A_i}\len{I} \le L-1$.
        The total length of each outer component is more than $T$ to the left and right of the gaps, so connectivity and $\onenorm{D}<T$ imply that all original gaps are covered.
        By construction, the $i$th gap has length $L$ and, independently of the order, the intervals of $\A_i$ cover at most $L-1$ of it.
        Thus a total length of at least $1$ must be filled by intervals of $\C$.
        Consider now excluding a portion of length $1/4$ at each end of the original gap.
        The length still to fill is at least $1/2$, and every interval of $\C$ covering part of it must be moved by at least $1/4$ (imagine intervals `crossing' the excluded portion).
        Since $T\ge 3L$, the length of inner and outer intervals in $\C$ is at most $1/(8T+1)$.
        Consequently, we need to move at least $(8T+1)/2$ intervals, giving a total cost of $1/4\cdot(8T+1)/2> T$.
        Hence, such $\A_i$ contradicts $\|D\| < T$ and we conclude that $\sum_{I\in \A_i}\len{I} \ge L$ for every $i\in[m]$.
        Since the sets $\A_i$ are pairwise disjoint and $\sum_{I\in \A}\len{I}=mL$, it follows that $\sum_{I\in \A_i}\len{I} = L$ for every $i \in [m]$.
        Finally, as $L/4<\len{I}<L/2$ for every $I\in\A$, each set~$\A_i$ contains exactly three intervals, implying that $\A_1,\ldots,\A_m$ give a partition of~$A$.
    \end{proof}

    Note that by the strong \NP-hardness of {\threepartition}, we may restrict to instances with $L$ (and thus $T$) polynomially bounded by the input size. Hence, $\I$ can be constructed in polynomial time from $(A,L)$.
    Lastly, all values used in the reduction are rational numbers.
\end{proof}

\subsection{Weak NP-hardness of \texorpdfstring{\gged[{\kconnected[1]}]}{GGED(Pi-conn)} on a Special Case of Intervals of Arbitrary Length}

We consider the special case of weighted intervals in which lengths equal weights.
We show that \gged[{\kconnected[1]}] is weakly \NP-hard on these instances.
Afterwards, we show that the reduction can be adapted to prove the same result for \gged[\edgeless].
This case was stated as open by the authors of~\cite{HonoratoDroguett2026}.
The two reductions use the same construction idea and the same cost bound.
For connectivity, we assume that intervals are closed, whereas for independence we assume that intervals are open.
We first establish the common skeleton and then prove the two results separately.
The reduction is from {\partition}~\cite{Garey1979}:
\defproblem
{{\partition}}
{A multiset $A=\{a_1,\ldots,a_n\}$ of $n$ positive integers with $\sum_{a\in A}a=2L$.}
{A subset $A'\subseteq A$ such that $\sum_{a\in A'}a = \sum_{a\in A\setminus A'}a=L$.}
{Output}

We construct a tuple of intervals $(I_1,\ldots,I_n,I_c)$ using $A$ as follows.
For every $i\in[n]$, we introduce an interval $I_i$ with $c(I_i)=0$ and $\len{I_i} =w_i=a_i$.
We also introduce an extra interval $I_c$ with $c(I_c)=0$ and $w_c=\len{I_c}=4L+1$.
Lastly, we set $T=w_cL+L^2$.

The intuitive idea of the two reductions is as follows.
The interval $I_c$ acts as a {\em divider} for the two parts of the partition.
The intervals are placed either to its left or to its right.
The cost threshold forces the total length on each side to be~$L$.
For connectivity, we will place two heavy outer intervals which delimit two \emph{gaps} of length~$L$, and the intervals connect the divider to the outer intervals (recall that the intervals in this case are closed).
For the edgeless case, the outer intervals are omitted, and the open item intervals can be packed consecutively on the two sides without intersecting.
We start by noting that $I_c$ must contain the origin in every solution of total weighted moving distance of at most $T$.

\begin{observation}
    \label{obs:wequall_divider_origin}
    If $\onenorm{w \cdot D}\le T$, then $I_c+d_c$ contains the origin.
\end{observation}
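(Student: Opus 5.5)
The plan is a direct cost bound by contraposition: assume $I_c+d_c$ does not contain the origin and show that $I_c$ alone already costs more than $T$. Since $c(I_c)=0$ and $\len{I_c}=4L+1$, the moved interval is $[d_c-(4L+1)/2,\; d_c+(4L+1)/2]$. It misses the origin exactly when $\abs{d_c}>(4L+1)/2$.

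In that case the weighted cost of $I_c$ is
\[
w_c\abs{d_c} > (4L+1)\cdot\frac{4L+1}{2} = \frac{(4L+1)^2}{2} = 8L^2+4L+\tfrac12 .
\]
The threshold is $T = w_cL+L^2 = (4L+1)L+L^2 = 5L^2+L$. Since $L\ge 1$, we have $8L^2+4L+\tfrac12 > 5L^2+L$, so $\onenorm{w\cdot D}\ge w_c\abs{d_c} > T$, because all other terms in the sum are nonnegative. This contradicts $\onenorm{w\cdot D}\le T$.

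Equivalently, the bound $\onenorm{w\cdot D}\le T$ forces $\abs{d_c}\le T/w_c = L+L^2/(4L+1) < L+1 \le (4L+1)/2$, using $L\ge1$; this matches the highlighted range in \Cref{fig:partition_example}. The argument uses only the cost of $I_c$ and so applies to both reductions, whether the intervals are closed or open. For open intervals the boundary case $\abs{d_c}=(4L+1)/2$ is already excluded by the strict inequality $\abs{d_c}<L+1$. No step is a real obstacle; the only thing to check is the arithmetic comparison $L+L^2/(4L+1) < (4L+1)/2$.
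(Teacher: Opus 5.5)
Your main argument by contraposition is correct and takes essentially the paper's approach: bound $|d_c|$ by $T/w_c$ and check that this is below $w_c/2$. Your computation $T=5L^2+L<(4L+1)^2/2$ spells out the last inequality, which the paper leaves implicit.

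There is one arithmetic error, in your ``equivalently'' paragraph. The step $L^2/(4L+1)<1$, and hence $T/w_c<L+1$, is false for $L\ge 5$. For example, $L=5$ gives $L^2/(4L+1)=25/21$, so $T/w_c\approx 6.19>6=L+1$.

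Your remark on the open-interval boundary case $|d_c|=(4L+1)/2$ relies on this false bound. That case matters, because the observation is used in the edgeless reduction with open intervals.

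The fix is immediate. Use $L^2/(4L+1)<L/4$ to get $T/w_c<5L/4<(4L+1)/2$. Alternatively, note that your first computation already gives $w_c|d_c|\ge(4L+1)^2/2>T$ whenever $|d_c|\ge(4L+1)/2$. This handles closed and open intervals at once, exactly as the paper's strict inequality $|d_c|<w_c/2$ does.
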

\begin{proof}
    Since $c(I_c)=0$, its moving distance is $\abs{d_c}$, and its weighted moving distance is $w_c\abs{d_c}$.
    Thus $\abs{d_c}\le T/w_c = L+L^2/w_c < \len{I_c}/2 = w_c/2$.
    Therefore the centre of $I_c+d_c$ is at distance less than half its length from the origin.
\end{proof}

We next show, for a subset of $k$ intervals, the total weighted moving distance of moving the intervals to one side of $I_c$.
The lemma is stated for the right side.
However, we can reflect the instance through the origin and argue the same for the left side.
A reduction based on the same idea was previously used by the authors of~\cite{Lenstra1977} for the two-machine weighted-completion-time scheduling problem.
In our setting, however, we must consider the cost of moving the interval $I_c$.

\begin{lemma}
    \label{lem:wequall_one_side_cost}
    Suppose $I_c$ is moved to $y$ and contains the origin.
    If $I_{i_1},\ldots,I_{i_k}$, where $k\ge 0$ and $\set{i_1,\ldots,i_k} \subseteq[n]$, are placed consecutively with no gaps to the right of $I_c$ and $S=\sum_{j=1}^k\len{I_{i_j}}=\sum_{j=1}^k a_{i_j}$, then the total weighted moving distance of $I_{i_1},\ldots,I_{i_k}$ is $(\len{I_c}/2+y)S+S^2/2$.
\end{lemma}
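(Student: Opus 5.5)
The plan is a direct computation of each moved interval's displacement, followed by summation. By the lemma's hypothesis, $I_c$ has been moved by $y$, so its right endpoint lies at $y+\len{I_c}/2$. The intervals $I_{i_1},\ldots,I_{i_k}$ are packed consecutively with no gaps to the right of $I_c$, in this order. Hence the left endpoint of $I_{i_j}$ sits at $y+\len{I_c}/2+\sum_{l<j}a_{i_l}$, and its centre is at
\[
 y+\frac{\len{I_c}}{2}+\sum_{l<j}a_{i_l}+\frac{a_{i_j}}{2}.
\]
Every $I_i$ originally has centre $0$. Also, $I_c$ contains the origin, so $y+\len{I_c}/2\ge 0$. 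Therefore this centre is nonnegative, and the displacement of $I_{i_j}$ equals its new centre, with no absolute-value cases to handle.

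Next, multiply by the weight $w_{i_j}=a_{i_j}$ and sum over $j$. The total splits into three parts:
\[
 \Bigl(\frac{\len{I_c}}{2}+y\Bigr)\sum_j a_{i_j}
 \;+\;\sum_j a_{i_j}\sum_{l<j}a_{i_l}
 \;+\;\frac12\sum_j a_{i_j}^2 .
\]
The first part is $(\len{I_c}/2+y)S$.

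The main step, and the only one needing care, is the remaining two parts. They combine through the identity
\[
 S^2=\sum_j a_{i_j}^2+2\sum_{l<j}a_{i_l}a_{i_j},
\]
so $\sum_{l<j}a_{i_l}a_{i_j}+\tfrac12\sum_j a_{i_j}^2=S^2/2$. This gives the claimed total $(\len{I_c}/2+y)S+S^2/2$.

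Because the weights equal the lengths, the result does not depend on the order of packing. The case $k=0$ is trivial, since both sides are $0$. The one subtlety to state explicitly is the sign argument: containment of the origin by $I_c$ is exactly what makes each displacement equal to the new centre rather than its absolute value.
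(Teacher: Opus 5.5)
Your proposal is correct and follows the paper's proof: compute each final centre as $\len{I_c}/2+y+\sum_{l<j}a_{i_l}+a_{i_j}/2$, use that the moved $I_c$ contains the origin to identify this with the moving distance, multiply by $w_{i_j}=a_{i_j}$, and collapse the cross terms via the expansion of $S^2$. Your implicit assumption that $i_1,\ldots,i_k$ is the left-to-right packing order matches the paper's explicit relabelling, so nothing is lost.
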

\begin{proof}
    If $k=0$, then $S=0$ and the claim follows.
    Thus assume $k>0$.
    Relabel the intervals so that $I_{i_1},\ldots,I_{i_k}$ are ordered from left to right in the final placement.
    Since the intervals are consecutive and lie to the right of the moved $I_c$, the final centre of $I_{i_j}$ is
    $\len{I_c}/2+y+\sum^{j-1}_{h=1}\len{I_{i_h}}+\len{I_{i_j}}/2$.
    This point is positive because the moved $I_c$ contains the origin.
    Since $c(I_{i_j}) = 0$, its moving distance is the same value.
    Multiplying this distance by $w_{i_j}$ yields its weighted moving distance.
    The total weighted moving distance is
    \begin{align*}
        \sum_{j=1}^k & w_{i_j}\left(\frac{\len{I_c}}{2}+y+\sum^{j-1}_{h=1}\len{I_{i_h}}+\frac{\len{I_{i_j}}}{2}\right)                    \\
                     & = \left(\frac{\len{I_c}}{2}+y\right)S+\sum_{j=1}^k\sum^{j-1}_{h=1}a_{i_h}a_{i_j}+\frac{1}{2}\sum_{j=1}^k a_{i_j}^2 
    \end{align*}
    Expanding $(\sum_{j=1}^k a_{i_j})^2$ yields $\sum_{j=1}^k a_{i_j}^2 + 2 \sum_{j=1}^k\sum_{h=1}^{j-1} a_{i_j}a_{i_h}$. Therefore the above value equals $(\len{I_c}/2+y)S+(1/2)(\sum_{j=1}^k a_{i_j})^2 = (\len{I_c}/2+y)S+S^2/2.$
\end{proof}

We remark here that \Cref{lem:wequall_one_side_cost} does not require any particular order for the moved intervals.
Each interval $I_i$ contributes to the cost with $w_i(\len{I_c}/2+y+\len{I_i}/2)$ plus $w_i\len{I_j}$ for every interval $I_j$ placed between $I_c$ and $I_i$.
If $I_i$ is between $I_c$ and $I_j$, then the value $w_j\len{I_i}=a_i a_j$ appears in the weighted moving distance of $I_j$.
This is the same property noted by Jansen and Kahler~\cite{Jansen2023} for jobs whose processing times equal their weights.

For a solution of our constructed instance, let the intervals placed to the right of $I_c$ have total length $S_r$, and let the intervals placed to the left of $I_c$ have total length $S_\ell=2L-S_r$.
We show the following consequence of \Cref{lem:wequall_one_side_cost} and use it in both reductions.
\wequallbothsidescost*\label{lem:wequall_both_sides_cost*}
\begin{proof}
    Let $D$ be the distance vector where $I_c$ is moved to point $y$.
    Applying \Cref{lem:wequall_one_side_cost} to the right and left sides, we obtain
    \begin{align*}
        \onenorm{w \cdot D}
         & \ge w_c\abs{y}
        + \left(\frac{\len{I_c}}{2}-y\right)S_\ell+\frac{S_\ell^2}{2}
        + \left(\frac{\len{I_c}}{2}+y\right)S_r+\frac{S_r^2}{2}                \\
         & = w_c\abs{y}
        + \left(\frac{\len{I_c}}{2}-y\right)(2L-S_r)+\frac{(2L-S_r)^2}{2}      \\
         & \phantom{=} + \left(\frac{\len{I_c}}{2}+y\right)S_r+\frac{S_r^2}{2} \\
         & = w_c\abs{y}+\len{I_c}L+L^2+(S_r-L)^2+2(S_r-L)y.
    \end{align*}
    Since $\abs{S_r-L}\le L$ and $w_c>2L$, it follows that
    $w_c\abs{y}+2(S_r-L)y\ge (w_c-2L)\abs{y}\ge 0$.
    Therefore $\onenorm{w \cdot D}\ge \len{I_c}L+L^2+(S_r-L)^2=T+(S_r-L)^2$.
\end{proof}

\paragraph{Connectivity.}

We start by showing the reduction for \gged[{\kconnected[1]}].
In this case, we add two more intervals denoted by $I_\ell$ and $I_r$ (see again~\Cref{fig:partition_example}), one on each side of $I_c$ and separated from it by a gap of length $L$.
Intuitively, we connect these two intervals to $I_c$ using $I_{1},\ldots,I_n$.
We show that this can be done with total weighted moving distance of at most $T$ if and only if $A$ is a yes-instance of {\partition}.

Formally, let $I_\ell$ and $I_r$ be two intervals whose lengths and weights equal $2T+1$, and set $c(I_\ell) = -(\len{I_c}/2 + L + (2T+1)/2)$ and $c(I_r) = -c(I_\ell)$.
Then $I_\ell$ and $I_r$ correspond to the intervals described above, and the resulting instance is $\I = (I_\ell,I_1,\ldots,I_n,I_c,I_r)$.

\wequallconnnphard*\label{thm:wequall_conn_nphard*}
\begin{proof}
    Let $A$ be an instance as described above.
    \begin{claim}\label{claim:wequall_partition_iff_conn}
        An instance $A$ of {\partition} is a yes-instance if and only if there is a distance vector $D$ such that $G_{\I+D}\in\kconnected[1]$ and $\onenorm{w \cdot D}\le T$.
    \end{claim}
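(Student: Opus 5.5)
Both directions go through Lemma~\ref{lem:wequall_both_sides_cost}, with Observation~\ref{obs:wequall_divider_origin} controlling the position of $I_c$.

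\emph{Forward direction.} Let $A'\subseteq A$ be a partition set with $\sum_{a\in A'}a=L$. We leave $I_c$, $I_\ell$ and $I_r$ where they are. The intervals of $A'$ are placed consecutively, without gaps, to the right of $I_c$. They exactly fill the closed gap of length $L$ between $I_c$ and $I_r$, so consecutive closed intervals touch and the chain connects $I_c$ to $I_r$. The remaining intervals are placed in the same way on the left, connecting $I_\ell$ to $I_c$. Applying Lemma~\ref{lem:wequall_one_side_cost} with $y=0$ on both sides, the cost is
\[
2\bigl(\len{I_c}L/2+L^2/2\bigr)=w_cL+L^2=T .
\]

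\emph{Backward direction.} Let $D$ satisfy $G_{\I+D}\in\kconnected[1]$ and $\onenorm{w\cdot D}\le T$.
\begin{enumerate}
\item Since $w_\ell=w_r=2T+1$, the intervals $I_\ell$ and $I_r$ each move by less than $1/2$. By Observation~\ref{obs:wequall_divider_origin}, $I_c$ moves to some $y$ and still contains the origin.
\item For the graph to be connected, both gaps between $I_c$ and $I_\ell$, $I_r$ must be covered by the item intervals. Each item interval can lie in at most one gap, because it has length at most $2L<\len{I_c}$.
\item Let $S_r$ be the total length of the items covering the right gap and $S_\ell$ that of the left gap. The right gap has length at least $L-y-\epsilon_r$, where $\epsilon_r$ is the inward movement of $I_r$, and symmetrically on the left.
\item Compare $D$ with the modified solution in which the items on each side are pushed consecutively against $I_c$, closing any gaps between them. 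This only moves each item towards the origin and does not increase its cost. By Lemma~\ref{lem:wequall_both_sides_cost}, the item cost plus $w_c\abs{y}$ is then at least $T+(S_r-L)^2$, where items not used in the gaps are assigned to a side arbitrarily.
\end{enumerate}

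The main obstacle is the slack created by moving $I_\ell$ and $I_r$ inwards. If $I_r$ moves inwards by $\epsilon_r$, the chain on the right only needs length $L-\epsilon_r$ (up to $y$), so the items could be pushed closer and save cost. Every item on that side shifts by at most $\epsilon_r$ and has weight $a_i$, so the saving is at most $\epsilon_r\cdot 2L$. Moving $I_r$ costs $(2T+1)\epsilon_r>2L\epsilon_r$, which more than compensates, and the same holds on the left. Thus $\onenorm{w\cdot D}\ge T+(S_r-L)^2$ still holds after correctly handling the covering requirement via the pushed configuration. One must also account for the contribution $2(S_r-L)y$; this is already absorbed inside the proof of Lemma~\ref{lem:wequall_both_sides_cost}.

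Concluding, $\onenorm{w\cdot D}\le T$ forces $(S_r-L)^2\le 0$, so $S_r=L$, and the items on the right side form the required subset $A'$. The lengths covering each gap must be at least the gap length, so integrality and $S_\ell+S_r\le 2L$ give exactly $L$ per side. Finally, the initial intersection graph has three components: $I_\ell$, $I_r$, and $I_c$ together with all items, which contain the origin. The numbers involved are polynomial in the binary encoding, so the reduction gives weak \NP-hardness.
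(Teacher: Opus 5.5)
Your strategy is the paper's, and the forward direction is fine. You compare with the pushed configuration of Lemma~\ref{lem:wequall_both_sides_cost} and charge the saving caused by the inward movement of $I_\ell,I_r$ to their weight $2T+1$. The backward direction, however, has a gap exactly in the step you call the main obstacle.

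\textbf{The slack is not a per-side quantity.} Relative to the pushed configuration, the right chain can be shifted towards the origin by $\lambda_r=S_r-(L-y+d_r)$. This contains $y$ and $S_r-L$, not just $\epsilon_r$. Take $d_r=0$, $d_\ell=y=0.3$ and $S_\ell=S_r=L$. The right gap then has length $L-0.3$, so the right items can sit $0.3$ closer to the origin and save $0.3L$, although $I_r$ does not move. Hence ``every item on that side shifts by at most $\epsilon_r$'' is false, and the side-by-side compensation fails. Your remark about $2(S_r-L)y$ does not repair this. The lemma absorbs that term in the cost of the pushed configuration, not the $y$ and $S_r-L$ hidden in the slack. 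The missing idea is the coupling of the two sides that the paper uses. Covering gives $\lambda_\ell,\lambda_r\ge 0$, where $\lambda_\ell=S_\ell-(L+y-d_\ell)$. If $S_\ell+S_r=2L$, then $\lambda_\ell+\lambda_r=d_\ell-d_r$. So the total saving is at most $S_\ell\lambda_\ell+S_r\lambda_r\le 2L(d_\ell-d_r)\le 2L(\abs{d_\ell}+\abs{d_r})$, which $(2T+1)(\abs{d_\ell}+\abs{d_r})$ dominates.

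\textbf{All items must be shown to be used.} That identity, and your pushing comparison itself, require every item to take part in covering a gap. You never establish this; instead you assign unused items to a side arbitrarily. For such items the comparison is false: an unused item left at the origin inside $I_c+d_c$ costs nothing, but has positive cost once pushed against $I_c$. The missing step is the paper's counting argument. Because $\abs{d_\ell}+\abs{d_r}<1/2$, the two gaps have total length $\max\{0,L+y-d_\ell\}+\max\{0,L-y+d_r\}>2L-1/2$. The item lengths are positive integers summing to $2L$, so all items must be used. Writing the gaps with $\max\{0,\cdot\}$, or as signed lengths, also covers a case your step~2 ignores: $I_c$ may move by up to $T/w_c>L$ and close one gap by itself.

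\textbf{Integrality is used in the wrong place.} It belongs at the counting step, before the cost bound. Your closing use of it is wrong as stated: covering plus integrality does not force $L$ per side. With $y=0.7$, $d_r=-0.3$, $d_\ell=0$, the split $S_r=L-1$, $S_\ell=L+1$ covers both gaps. Only the cost bound forces $S_r=L$.
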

    \begin{proof}
        \renewcommand{\qedsymbol}{$\triangle$}
        First, assume that $A$ is a yes-instance of {\partition}.
        Let $A'\subseteq A$ be such that $\sum_{a\in A'}a=L$.
        Let $\I'=\set{I'_1,\ldots,I'_k}= \set{I_i \in \I\colon a_i \in A'}$ be the intervals corresponding to elements of $A'$ labelled in arbitrary order.
        We place the intervals of $\I'$ to the right of $I_c$ such that $c(I'_{i+1})+d'_{i+1} - (c(I'_{i})+d'_{i}) = (\len{I'_{i+1}}+\len{I'_{i}})/2$ and the left endpoint of $I'_1+d'_1$ intersects with the right endpoint of $I_c$.
        We also move the intervals $\set{I_1,\ldots,I_n} \setminus \I'$ in the same manner to the left of $I_c$.
        Lastly, we set $d_c = d_\ell = d_r = 0$.
        Let $D$ be the obtained distance vector.
        Since $\sum_{i=1}^k \len{I'_i} = L$, the interval $I'_k$ intersects exactly the left endpoint of $I_r$.
        Thus there is a path $v_c,v'_1,\ldots,v'_{k},v_{r}$ in $G_{\I+D}$.
        Similarly, $I_c$, $I_\ell$, and the remaining intervals form a path from $v_c$ to $v_\ell$.
        Hence $G_{\I+D}$ is in $\kconnected[1]$.
        The length of the intervals moved to the left and right is $L$, thus the total weighted moving distance is equal to $T$ by \Cref{lem:wequall_one_side_cost}.
        Therefore there is a distance vector $D$ such that $G_{\I+D}\in \kconnected[1]$ and $\onenorm{w\cdot D}\le T$.

        In the other direction, assume that such a distance vector $D$ exists.
        We first show that $D$ describes a movement in which all intervals $I_1,\ldots,I_n$ are moved.
        In other words, we need all $I_1,\ldots,I_n$ to connect $I_\ell, I_c, I_r$.
        Assume without loss of generality that $I_c$ is moved to $y=d_c$ such that $y\ge 0$.
        For the intervals $I_\ell$ and $I_r$, we must have $(2T+1)(\abs{d_\ell}+\abs{d_r})\le T$ and thus $\abs{d_\ell}+\abs{d_r}<1/2$.
        Consequently, the sum of the lengths of the two gaps adjacent to $I_c$ is $\max\{0,L+y-d_\ell\}+\max\{0,L-y+d_r\} > 2L-1/2$.
        Since $\sum_{i=1}^n \len{I_i} = 2L$ and $\len{I_i} \in \mathbb{N}_{>0}$ for all $i\in [n]$, we need all intervals to cover both gaps (otherwise the total length is at most $2L-1$).
        Moreover, the vector $D$ must satisfy $\abs{d_\ell}+\abs{d_r}<1/2$ and by \Cref{obs:wequall_divider_origin}, $I_c+d_c$ contains the origin.
        Hence if $\onenorm{w \cdot D}\le T$, the intervals $I_{1}+d_{1},\ldots,I_{n}+d_n$ cover the gaps between $I_\ell, I_c, I_r$. Moreover, none of the intervals $I_1,\ldots,I_n$ is contained in $I_c+d_c$, and they form two chains between $I_{\ell}$, $I_c$ and $I_r$.
        We show that these two chains have length equal to $L$.

        Assume $\I_r$ are the intervals covering the right gap and $S_r$ their total length.
        Thus the total length of the intervals covering the left gap is $S_\ell = 2L-S_r$.
        The signed lengths of the left and right gaps in $\I+D$ are $L +y - d_\ell$ and $L - y + d_r$, respectively.
        Let $\lambda_\ell = S_\ell - (L +y - d_\ell)$ and $\lambda_r = S_r - (L - y + d_r)$ be the differences between the chain lengths and the corresponding signed gap lengths.
        Since the intervals cover the gaps, we have that $\lambda_\ell, \lambda_r \ge 0$.
        Here we recall \Cref{lem:wequall_both_sides_cost}.
        The movement described by the lemma places each interval consecutively to the left and right of $I_c$ and the total cost of such a movement is at least $T+(S_r-L)^2$.
        Since moving an interval closer to the origin decreases its cost, the optimal movement to intersect $I_{\ell}+d_\ell$ and $I_r+d_r$ must consist of consecutive touching intervals from $r(I_\ell +d_\ell)$ and $\ell(I_r + d_r)$ until $I_c$ is intersected.
        This is equivalent to moving the intervals as described in \Cref{lem:wequall_both_sides_cost} and then shifting the intervals by $\lambda_\ell$ and $\lambda_r$ respectively, towards the origin.
        Hence the total weighted moving distance of $I_{1},\ldots,I_n,I_c$ in $D$ is at least $T+(S_r-L)^2 - S_\ell\lambda_\ell - S_r\lambda_r$.
        We further bound this value.
        First, since $-2L \le -S_\ell,-S_r$, we obtain $-S_\ell\lambda_\ell - S_r\lambda_r \ge -2L(\lambda_\ell + \lambda_r)$ and $-2L(\lambda_\ell + \lambda_r)= -2L(d_\ell -d_r)$ by the definition of $\lambda_\ell$ and $\lambda_r$.
        Adding the weighted moving distance of $I_\ell$ and $I_r$, the cost $\onenorm{w\cdot D}$ is bounded as follows:
        \begin{align*}
            \onenorm{w\cdot D} & \ge T+(S_r-L)^2 - S_\ell\lambda_\ell - S_r\lambda_r + (2T+1)(\abs{d_\ell}+ \abs{d_r}) \\
                               & \ge T+(S_r-L)^2 -2L(d_\ell -d_r) + (2T+1)(\abs{d_\ell}+ \abs{d_r})                    \\
                               & \ge T+(S_r-L)^2 -2L(\abs{d_\ell}+ \abs{d_r}) + (2T+1)(\abs{d_\ell}+ \abs{d_r})        \\
                               & \ge T+(S_r-L)^2 + (2T+1-2L)(\abs{d_\ell}+ \abs{d_r})                                  \\
                               & \ge T+(S_r-L)^2,
        \end{align*}
        since $-(d_\ell - d_r) \ge -(\abs{d_\ell}+ \abs{d_r})$ and $2T+1-2L>0$.
        Hence $S_r = L$ since we have $\onenorm{w\cdot D} \le T$.
        Therefore the corresponding elements of $A$ for $\I_r$ form a partition of $A$.
        This completes the proof.
    \end{proof}
    The correctness of the reduction follows from \Cref{claim:wequall_partition_iff_conn} and the construction of $\I$ takes polynomial time since $\size{\I} = n+3$.
    Lastly, $G_\I$ has three connected components by construction and the integrality of the instance is guaranteed by the definition of {\partition}.
\end{proof}

\paragraph{The Edgeless Case.}
Lastly, we show how the reduction can be adapted to the edgeless case.
\wequalledgelessnphard*\label{thm:wequall_edgeless_nphard*}
\begin{proof}
    Let $A$ be an instance as described above.
    \begin{claim}\label{claim:wequall_partition_iff_edgeless}
        An instance $A$ of {\partition} is a yes-instance if and only if there is a distance vector $D$ such that $G_{\I+D}\in\edgeless$ and $\onenorm{w \cdot D}\le T$.
    \end{claim}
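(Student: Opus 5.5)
We prove the two directions of \Cref{claim:wequall_partition_iff_edgeless} separately. The instance is $\I=(I_1,\ldots,I_n,I_c)$: open intervals, all centred at the origin, with lengths equal to weights.

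\emph{($\Rightarrow$)} Let $A'\subseteq A$ satisfy $\sum_{a\in A'}a=L$.
\begin{itemize}
\item Keep $I_c$ fixed, so $d_c=0$ and $y=0$.
\item Place the intervals corresponding to $A'$ consecutively to the right of $I_c$. The first one has its left endpoint at the right endpoint of $I_c$, and each subsequent one starts where the previous one ends.
\item Place the remaining intervals symmetrically to the left of $I_c$.
\end{itemize}
Since the intervals are open, touching endpoints do not create edges, so $G_{\I+D}\in\edgeless$. By \Cref{lem:wequall_one_side_cost} with $y=0$ and $S=L$ on each side, each side costs $\len{I_c}L/2+L^2/2$. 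The total is $\len{I_c}L+L^2=w_cL+L^2=T$.

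\emph{($\Leftarrow$)} Let $D$ satisfy $G_{\I+D}\in\edgeless$ and $\onenorm{w\cdot D}\le T$.
\begin{enumerate}
\item By \Cref{obs:wequall_divider_origin}, $I_c+d_c$ contains the origin; set $y=d_c$.
\item Each $I_i+d_i$ is disjoint from $I_c+d_c$, so it lies entirely to the left or entirely to the right of $I_c+d_c$. This defines the right set $\I_r$ with total length $S_r$, and the left set with total length $S_\ell=2L-S_r$.
\item On each side, the intervals are pairwise disjoint. Order the right side from left to right. Then the $j$th interval has its centre at least $\len{I_c}/2+y+\sum_{h<j}\len{I_{i_h}}+\len{I_{i_j}}/2$, which is positive. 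Since every original centre is $0$, its weighted cost is at least the cost of the consecutive packing in \Cref{lem:wequall_one_side_cost}. The left side is handled by reflection. Equivalently, sliding every interval towards the origin until the packing is tight keeps the graph edgeless and never increases the cost.
\item Hence the cost is at least that of the packed configuration. By \Cref{lem:wequall_both_sides_cost}, this is at least $T+(S_r-L)^2$.
\item Since $\onenorm{w\cdot D}\le T$, we get $S_r=L$, and the elements of $A$ corresponding to $\I_r$ form a valid partition.
\end{enumerate}

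The theorem's properties follow directly from the construction.
\begin{itemize}
\item (i) All intervals contain the origin, so $G_\I$ is complete.
\item (ii) Lengths equal weights by construction.
\item (iii) All values are integers, since each $a_i$ is an integer and $\len{I_c}=4L+1$.
\item The instance has $n+1$ intervals and is built in polynomial time. Since \partition is weakly \NP-hard, so is this problem.
\end{itemize}

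The main obstacle is the lower bound in step 3. It requires a careful argument that, on each side, the disjoint intervals' displacements dominate those of the tight consecutive packing, whatever the order of the intervals. I would argue this by induction on position. The $j$th interval from $I_c$ must clear $I_c$ and all $j-1$ intervals before it, which is exactly the packed displacement. Multiplying by the positive weights and summing then gives the bound.
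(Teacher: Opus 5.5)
Your proposal is correct and follows the paper's proof. The forward direction uses \Cref{lem:wequall_one_side_cost} with $y=0$, and the converse uses \Cref{obs:wequall_divider_origin}, the left/right split, reduction to the consecutive packing and \Cref{lem:wequall_both_sides_cost}; your inductive per-interval displacement bound in step 3 is just a more explicit version of the paper's remark that closing gaps towards the origin never increases the cost.
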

    \begin{proof}
        \renewcommand{\qedsymbol}{$\triangle$}
        First, assume that $A$ is a yes-instance of {\partition}.
        Let $A'\subseteq A$ be such that $\sum_{a\in A'}a=L$.
        We set $d_c = 0$ and let $\I'=\set{I'_1,\ldots,I'_k}= \set{I_i \in \I\colon a_i \in A'}$ be the intervals corresponding to elements of $A'$ labelled in arbitrary order.
        We place the intervals corresponding to the elements of $A'$ to the right of $I_c$ such that $c(I'_{i+1})+d'_{i+1} - (c(I'_{i})+d'_{i}) = (\len{I'_{i+1}}+\len{I'_{i}})/2$, and the remaining intervals in the same manner to the left of $I_c$.
        Hence $G_{\I+D}\in\edgeless$ since intervals are open.
        Moreover, since $y=0$ ($I_c$ is fixed) and $S=L$, the weighted moving distance of the intervals on the right is $\len{I_c}L/2+L^2/2$ by \Cref{lem:wequall_one_side_cost}.
        The intervals on the left have the same weighted moving distance.
        Consequently, we have $\onenorm{w \cdot D}=\len{I_c}L+L^2=w_cL+L^2=T$.

        Conversely, assume that there is a distance vector $D$ such that $G_{\I+D}\in\edgeless$ and $\onenorm{w \cdot D}\le T$.
        By \Cref{obs:wequall_divider_origin}, $I_c+d_c$ contains the origin.
        Let $y$ be the centre of $I_c+d_c$.
        Since $G_{\I+D} \in \edgeless$, each interval $I_i+d_i$, $i\in[n]$, lies completely to the left or completely to the right of $I_c+d_c$.

        Let $\I_r\subseteq\{I_1,\ldots,I_n\}$ be the set of intervals placed to the right of $I_c+d_c$, and let $S_r=\sum_{I_i\in \I_r}\len{I_i}=\sum_{I_i\in \I_r}a_i$.
        The intervals placed to the left of $I_c+d_c$ have total length $S_\ell=2L-S_r$.
        Since every item interval is initially centred at the origin and has positive weight, closing any gap by moving the intervals beyond it towards the origin decreases the total weighted moving distance.
        Hence, the total weighted moving distance is minimised by placing the intervals consecutively to the left and right of $I_c+d_c$.
        Applying \Cref{lem:wequall_both_sides_cost}, we obtain $\onenorm{w \cdot D}\ge T+(S_r-L)^2$.
        Since $\onenorm{w \cdot D}\le T=w_cL+L^2$, we obtain $S_r=L$.
        Hence the elements represented by the intervals of $\I_r$ give a solution to the instance $A$ of {\partition}.
        This concludes the proof.
    \end{proof}
    Again, the correctness of the reduction follows from \Cref{claim:wequall_partition_iff_edgeless} and the construction of $\I$ takes polynomial time since $\size{\I} = n+1$.
    Lastly, $G_\I$ is complete by construction and the integrality of the instance is given by the definition of {\partition}.
\end{proof}

\subsection{Strong NP-hardness of \texorpdfstring{\gged[{\kconnected[1]}]}{GGED(Pi-conn)} on Weighted Unit Disks.}

We show that \gged[{\kconnected[1]}] on weighted unit disks is strongly \NP-hard by reducing from {\planarmonothreesat}.
Recall that a clause is \emph{monotone} if it contains only positive or only negative literals.
We say that a clause is \emph{positive} or \emph{negative} in accordance with its literals.
For a 3-SAT formula $\phi$, let $X_\phi$ be the set of variables in~$\phi$ and let $C_\phi$ be the set of clauses in~$\phi$.
Then the \emph{incidence graph} $G_\phi = (X_\phi \cup C_\phi, E_\phi)$ of $\phi$ is the bipartite graph that has an edge $\{x,c\}$ whenever,  in~$\phi$, $x$ or $\overline{x}$ occurs in~$c$.

\defdecproblem{\planarmonothreesat}%
{A CNF formula $\phi$ given by a set $C_\phi$ of
    monotone clauses of length at most~3 over a set
    $X_\phi$ of variables such that the incidence graph~$G_\phi$ is planar.}%
{Decide whether $\phi$ is satisfiable.}

It is known that {\planarmonothreesat} is \NP-hard even if we insist
that $G_\phi$ is drawn such that all variables are arranged on a
horizontal line, positive clauses appear above the line, and negative
clauses appear below the line~\cite{DeBerg2012}; see
\Cref{fig:pm3sat_example}~(left).

\snphwdisk*\label{thm:snph_wdisk*}
\begin{proof}
    We construct gadgets following the embedding of~$G_\phi$ described above.
    \Cref{fig:pm3sat_example}~(right) illustrates the skeleton of the reduction.
    The core of our reduction is the variable gadget, in which some disconnected disks need to move.
    Our construction uses weighted disks of diameter~1.
    We use only two different weights.
    A \emph{light disk}
    is a disk with unit weight, and a \emph{heavy disk} is a disk with weight~$W$, which we will fix later.
    We will make $W$ large enough so that we can assume that heavy disks are not moved in a yes-instance of the problem.
    In the following figures, heavy disks are illustrated using black thin disks, and light disks are bold red disks.
    \begin{figure}[bt]
        \centering
        \includegraphics[page=6]{gged_kclique_kconn.pdf}%
        \caption{Left: An instance of {\planarmonothreesat}, a Boolean
            formula
            $\phi = (x_1\lor x_2 \lor x_3) \land (\overline{x_2} \lor
                \overline{x_3} \lor \overline{x_4}) \land (x_1\lor x_3 \lor x_4)
                \land (\overline{x_1}\lor \overline{x_2} \lor \overline{x_4})$ with
            four variables and four clauses.  The satisfying assignment
            $(x_1,x_2,x_3,x_4) = (1,0,1,0)$ is represented by
            black lines. Right: Skeleton of the collection of
            disks $\D_\phi$ constructed from $G_\phi$. The arrows represent
            the movement of the light disks in the variable gadgets for
            the assignment $(1,0,1,0)$.}
        \label{fig:pm3sat_example}
    \end{figure}
    Our clause gadgets consist of many heavy disks that nearly connect the
    corresponding variable gadgets.  A truth assignment to a variable will
    correspond to a movement of the light disks inside the variable
    gadget such that the gadgets of those clauses that are fulfilled by
    the assignment are connected to the variable gadget.
    Note that consecutive variable gadgets are connected by heavy disks (see again \Cref{fig:pm3sat_example}~(right)), so it suffices to move disks inside the variable gadgets in order to make the instance connected.

    Let $X_\phi=\{x_1,\ldots,x_n\}$, $m=|C_\phi|$, and let $\D_\phi$ be the set of unit disks that we will construct using~$G_\phi$.
    The set $\D_\phi$ consists of $n$ subsets $\X_i$, for $i\in [n]$ that represent the \emph{variable gadgets}.
    An example of a variable gadget~$\X_i$ (rotated clockwise by~$90^\circ$)
    is shown in \Cref{fig:variablegadget_example} (left).
    Assume that $x_i$ appears as a positive literal in $k_i$ clauses and as a negative literal in $k'_i$ clauses, and set $\kappa = \max(k_i,k'_i)$.
    The gadget $\X_i$ contains $\kappa+1$ light disks $x_{i, 0}, x_{i,1},\ldots,x_{i,\kappa}$ for positive clauses and $\kappa+1$ disks $\overline{x}_{i, 0},\overline{x}_{i,1},\ldots,\overline{x}_{i,\kappa}$ for negative clauses.
    Apart from the described light disks, a variable gadget has $2\kappa+2$ disconnected components and a \emph{skeleton} that consists of intersecting heavy disks.
    For $j\in [\kappa]$, there are two disconnected components containing a disk centred at $c(x_{i,j})\pm(4,0)$ (that is, one at $c(x_{i,j})+(4,0)$ and one at $c(x_{i,j})-(4,0)$).
    There are also two disconnected components containing a disk centred at $c(x_{i,0})\pm(0,4)$.
    Moreover, the skeleton is connected to disks centred at $c(x_{i,j})\pm(2,0)$ for each $x_{i,j}$ with $j \in [\kappa]$ and two disks centred at $c(x_{i,0})\pm(0,2)$.
    The same applies to the disks $\overline{x}_{i,j}$.
    Note that the variable gadget can be drawn on a grid, thus all centres belong to $\mathbb{Z}^2$.

    Moving a light disk $x_{i,j}$ by $(\pm 3,0)$ for $j\in[\kappa]$, or $(0,\pm 3)$ for $j =0$, connects a disconnected component to the skeleton.
    The same holds for the disks $\overline{x}_{i,j}$.
    We move light disks to connect the $2\kappa+2$ components with the skeleton, and we define two movements that connect all components.
    We say that light disks are \emph{rotated counterclockwise} if $x_{i,j}$ is moved to $c(x_{i,j})+(-3,0)$ for $j\in  [\kappa]$, $\overline{x}_{i,0}$ to $c(\overline{x}_{i,0})+(0,-3)$, $\overline{x}_{i,j}$ to $c(\overline{x}_{i,j})+(3,0)$ for $j\in [\kappa]$ and lastly, $x_{i,0}$ to $c(x_{i,0})+(0,3)$.
    Analogously, the light disks are \emph{rotated clockwise} if we instead subtract the described movement vectors.
    Rotating disks clockwise and counterclockwise corresponds
    to assigning~$x_i$ to false and true, respectively.
    In particular, we observe the following.
    \begin{claim}
        \label{claim:twomov_movdisks}
        If the light disks in~$\X_i$ are all rotated clockwise or are
        all rotated counterclockwise, then the resulting intersection graph
        of $\X_i$ is connected.  Moreover, no other movement achieves
        connectivity with total moving distance of at most $3(2\kappa+2)$.
    \end{claim}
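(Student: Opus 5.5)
The first part is a direct check, and I would do it first. Under a counterclockwise rotation, each light disk $x_{i,j}$ with $j\in[\kappa]$ moves by $(-3,0)$. Its new centre is then at distance $1$ from both $c(x_{i,j})-(4,0)$ and $c(x_{i,j})-(2,0)$. Since the disks have diameter~1, disks whose centres are at distance~1 touch, so the moved disk intersects the component disk on one side and the skeleton disk. The same computation applies to $x_{i,0}$ with the vertical offsets $(0,\pm 2),(0,\pm 4)$, and to the disks $\overline{x}_{i,j}$ with the mirrored directions. I then have to check that every one of the $2\kappa+2$ disconnected components is reached exactly once. The intended design is that each light disk is responsible for one component, alternating sides, so that the rotation is a cyclic shift assigning components to light disks bijectively. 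I would verify this on the gadget layout of \Cref{fig:variablegadget_example}. The clockwise case is symmetric: negate all vectors.

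For the uniqueness part, I would argue by counting. There are $2\kappa+2$ light disks and $2\kappa+2$ disconnected components, and the heavy skeleton is fixed; within the budget it cannot move meaningfully, but here the claim only concerns light-disk movement. The first step is to show that any light disk that helps connect a component must move distance at least $3$. Its initial centre is at distance $4$ from the nearest component disk centre, and intersection requires centre distance at most $1$. Moreover, the components and the skeleton are pairwise far apart except through light disks. So a component can only be attached by a light disk whose new centre is within distance~$1$ of that component and, possibly via a chain of other light disks, within distance~$1$ of the skeleton.

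The second step is to show that each light disk can serve at most one component within its share of the budget. Both components adjacent to $x_{i,j}$ are at distance $4$, on opposite sides, so reaching either costs at least $3$, and the two are too far apart for one disk to touch both. A light disk that does not move, or moves less than $3$, attaches nothing. Hence connectivity requires all $2\kappa+2$ light disks to move at least $3$ each. With total at most $3(2\kappa+2)$, each moves exactly $3$, and then the only positions achieving the needed contacts are exactly $c\pm(3,0)$ (respectively $\pm(0,3)$): distance exactly $3$ to a point at distance $1$ from both the component centre and the skeleton centre forces the collinear point.

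The third step is the matching argument. Each component needs its own light disk, and each light disk has exactly two candidate components, namely its two neighbours along the cycle of components around the gadget. A perfect matching in an even cycle has exactly two choices, which are the clockwise and counterclockwise rotations.

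I expect the main obstacle to be ruling out chains of light disks, that is, cooperative placements where two light disks jointly bridge one component more cheaply. I would handle this by a distance bound: any light disk that attaches a component must be within distance $1$ of that component's disk, which already costs at least $3$. So every component consumes at least one full move of $3$, and the budget leaves no slack for helper disks.
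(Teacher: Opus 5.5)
Your proposal is correct and follows essentially the paper's route. In both, every light disk is forced to move exactly $3$ to a position touching one of its two neighbouring components (and hence the skeleton), and that choice then propagates around the cycle; your perfect-matching-in-an-even-cycle formulation is exactly the paper's remark that once one light disk moves, all others must follow the same rotation. Your pigeonhole count (each component needs its own light disk, each costing at least $3$, so the budget is exhausted) makes explicit a lower bound the paper only asserts. The forced position is justified most cleanly by equality in $4=\lVert q-c\rVert\le\lVert q-p\rVert+\lVert p-c\rVert\le 1+3$, where $c$ is the original centre, $p$ the new centre and $q$ the component centre; this uses the component alone, without invoking the skeleton centre.
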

    \begin{proof}
        \renewcommand{\qedsymbol}{$\triangle$}
        Rotating (counter)clockwise makes each of the $2\kappa+2$ light disks connect the skeleton to a unique component of $\X_i$.
        Since $\X_i$ has $2\kappa+3$ components induced by heavy disks, the movement yields a graph in $\kconnected[1]$.

        Since each light disk moves by a distance of exactly $3$, the total moving distance is $3(2\kappa+2)$, and no other position within distance $3$ connects a disconnected component to the skeleton.
        After we move one light disk, the movements of all other light disks are forced to follow the same rotation to connect every component. Hence the only two possible movements are the clockwise and counterclockwise rotations.
        Therefore, no other movement achieves connectivity with total moving distance of at most $3(2\kappa+2)$.
    \end{proof}

    We set $m' = \sum_{i\in [n]} (2\max(k_i,k'_i)+2) \in O(m)$, the sum of the numbers of light disks of all variables.
    We place the variable gadgets horizontally according to the embedding of $G_\phi$, and connect the skeletons of consecutive variable gadgets with a chain of $3m'$ intersecting heavy disks.
    It follows that no light disk of a variable gadget can be moved to another variable gadget by a distance of at most $3m'$.
    Clauses are simulated using claw-shaped connected components of heavy disks and placed following the structure of $G_\phi$.
    Since $G_\phi$ has a rectilinear embedding, a component can be defined as a horizontal row of intersecting disks, with at most three vertical arms of connected disks going to the corresponding variable gadgets.
    For the variable gadget $\X_i$ and the $j$th positive clause containing $x_i$ as a literal, $i\in [n]$ and $j\in [k_i]$, the corresponding connected component has a disk in $c(x_{i,j})+(-3,1)$.
    Hence, the connected component connects to the variable gadget when light disks of $\X_i$ are rotated counterclockwise.
    Equivalently, we place connected components for negative clauses from below.

    \begin{figure}[pbt]
        \centering
        \includegraphics[scale=1,page=7]{gged_kclique_kconn.pdf}
        \caption{Variable gadget $\X_i$ for $x_i$ (rotated by $90^\circ$). Left: There is one light disk corresponding to each clause that contains $x_i$ (note that $k'_i<k_i$ in this case). The arrows and light red bold disks describe the assignment of $x_i$ to true.
            The black shaded disks form the skeleton for $\X_i$ and the dashed grey disk indicates a slot for a negative clause not being used.
            After the movement, all clauses that contain $x_i$ as a positive literal are connected to the variable gadget. Right: A variable gadget with $\kappa = \max(k_i,k'_i) = 3$. All disk centres lie on the integer grid (orange).
        }
        \label{fig:variablegadget_example}
    \end{figure}

    If $\phi$ is satisfiable, then each light disk is moved by a distance of $3$ according to the variable assignment.
    Thus, the total moving distance is $3m'$ for any assignment of the variables.
    Lastly, we set $W = 12m'$ and prove the following claim. The value $12m'$ is a sufficiently large weight to ensure that in a feasible solution for $\D_\phi$, no heavy disk is moved such that it reduces the number of connected components.

    \begin{claim}\label{claim:3sat_iff_ggedconn}
        There is an assignment to the variables $x_1,\dots,x_n$ that
        satisfies~$\phi$ if and only if there is a
        $\size{\D_\phi}$-tuple~$D$ of movement vectors such that
        $G_{\D_\phi + D} \in \kconnected[1]$ and $\|w \cdot D\|\le 3m'$.
    \end{claim}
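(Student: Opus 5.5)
We prove the two directions separately, and in both we rely on \Cref{claim:twomov_movdisks} together with the choice $W=12m'$.

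For the forward direction, we fix a satisfying assignment. For every variable $x_i$ we rotate the light disks of $\X_i$ counterclockwise if $x_i$ is true and clockwise otherwise; no heavy disk is moved. By \Cref{claim:twomov_movdisks}, each $\X_i$ becomes internally connected. Consecutive variable gadgets are already joined by the chains of heavy disks, so the union of all variable gadgets is connected. Now take any clause, say a positive one. It contains some literal $x_i$ that is true. Then $x_{i,j}$ is moved to $c(x_{i,j})+(-3,0)$, and at this position it intersects the clause disk at $c(x_{i,j})+(-3,1)$. Hence the clause's claw component attaches to $\X_i$. Negative clauses are handled symmetrically, since their disks are placed from below. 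Every light disk moves by exactly $3$ and every light disk has weight $1$, so the total cost is $3m'$. The resulting graph is connected.

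For the backward direction, we take $D$ with $G_{\D_\phi+D}$ connected and $\|w\cdot D\|\le 3m'$.

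First, since $W=12m'$, every heavy disk moves by at most $3m'/W=1/4$. This displacement is too small to create any new contact. In the original layout, any two non-intersecting heavy disks have integer centres at distance at least $\sqrt2>1+1/2$ (or the gadget guarantees centre distance at least $2$). So heavy disks alone cannot merge any of the heavy-induced components. Every component merge must therefore pass through light disks.

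Second, light disks cannot leave their own gadget. The total light movement is at most $3m'$, and the heavy chains between gadgets have length $3m'$, so no light disk can reach another gadget.

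Third, we count merges. The number of heavy-induced components that must be merged is large enough that, within each $\X_i$, every one of the $2\kappa+2$ pockets needs its own light disk. Each such disk must travel at least about $3-1/2$ even after the heavy disks are perturbed. With the budget $3m'$, this forces essentially every light disk to move by $3$ in exactly one of the two directions. A more careful version of the argument of \Cref{claim:twomov_movdisks} then shows that the moves inside each gadget form a consistent rotation. We read off $x_i$ from this rotation: counterclockwise means true, clockwise means false.

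Finally, every clause component must be connected to the rest of the graph. Its only possible contacts are its three arm ends next to light-disk slots, and an arm end is reached only when the corresponding light disk moves toward it, that is, only under the satisfying rotation of that variable. Hence every clause contains a true literal, and the assignment satisfies $\phi$.

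The main obstacle is the robustness step in the backward direction. We must show that tiny heavy perturbations of up to $1/4$, combined with slack in the budget, cannot let some light disk get away with a shorter move. They also must not allow a light disk to connect a pocket from an unintended position, or one light disk to serve two pockets. I would handle this by using the integrality of the centres and the exact geometry. The pocket disks at distance $4$ from a light disk's centre and the skeleton disks at distance $2$ leave a unique feasible region of radius about $1/4$ around the offset $\pm3$. This forces a cost of at least $3-O(1/4)$ per light disk. The leftover slack, which is at most $3m'$ minus the sum of these lower bounds, is then far too small to pay for any heavy movement or for a second light disk being diverted.
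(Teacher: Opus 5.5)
\textbf{Gap in the robustness step of the backward direction.} Your outline is the paper's. The forward direction is identical. In the backward direction you argue, as the paper does, that heavy disks are effectively fixed, that light disks cannot leave their gadget, that \Cref{claim:twomov_movdisks} forces a rotation, and that each clause component can only attach through a correctly rotated light disk. The paper disposes of the heavy-disk step with a one-line assertion that heavy disks may be assumed not to move. Your attempt to justify that step fails as written. The inequality $\sqrt2>1+\tfrac12$ is false ($\sqrt2\approx 1.414$). The construction does contain non-intersecting heavy disks at centre distance exactly $\sqrt2$: the clause disk at $c(x_{i,j})+(-3,1)$ and the skeleton disk at $c(x_{i,j})+(-2,0)$. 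So a per-disk bound of $1/4$ does not stop two heavy disks from touching. Use the aggregate bound instead: summing over all heavy disks $h$, $\sum_h\abs{d_h}\le 3m'/W=1/4<\sqrt2-1$.

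Your closing slack argument also does not go through. If each light disk is only guaranteed a travel of $3-O(1/4)$, the leftover budget is $\Theta(m')$. That is the same order as $W\sum_h\abs{d_h}=12m'\sum_h\abs{d_h}$, so it can still pay for a positive constant amount of heavy displacement (up to $1/48$ with the bound $3-1/4$). Hence ``essentially every light disk moves by $3$'' is not derived.

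The missing ingredient is a local charging argument. Suppose a light disk bridges a pocket through a pocket disk $p$ that starts at distance $4$ from the light disk's centre. The triangle inequality then gives a travel of at least $3-\abs{d_p}$. Next, show that each heavy disk is such a target for only $O(1)$ light disks. Then the light cost is at least $3m'-c\sum_h\abs{d_h}$ for some constant $c<W$. This gives $\|w\cdot D\|\ge 3m'+(W-c)\sum_h\abs{d_h}$, which forces every heavy displacement to be $0$. Once heavy disks are fixed, the per-gadget lower bounds $3(2\kappa_i+2)$ sum to exactly $3m'$. So every gadget sits at its minimum, and \Cref{claim:twomov_movdisks} yields a rotation, as in the paper.

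This repair still relies on geometric facts that the paper also leaves implicit in \Cref{claim:twomov_movdisks}, and which you only gesture at:
\begin{itemize}
\item every pocket needs its own light disk, travelling at least $3$;
\item no light disk can bridge two pockets;
\item no cheaper contact, such as joining a clause disk to the skeleton with travel $\approx 2.55$, can replace a pocket bridge.
\end{itemize}
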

    \begin{proof}
        \renewcommand{\qedsymbol}{$\triangle$}
        If there is an assignment to the variables $x_1,\ldots,x_n$ that satisfies $\phi$, then we rotate disks of $\X_i$ counterclockwise if $x_i$ is true and clockwise otherwise.
        By \Cref{claim:twomov_movdisks}, this connects all components of $\X_i$.
        Moreover, all components corresponding to clauses are connected to at least one variable gadget, given that all clauses are satisfied.
        Only light disks were moved, which yields a total moving distance of $3m'$.
        Hence, there is a $\size{\D_\phi}$-tuple~$D$ of movement vectors such that $G_{\D_\phi + D} \in \kconnected[1]$ and $\|w \cdot D\|\le 3m'$.

        Conversely, assume that such a $D$ exists.
        Notice that any heavy disk is at a distance of at least one unit from any other component.
        If a heavy disk is moved to connect a component, then its moving distance exceeds the threshold $3m'$ by the value $W$.
        Hence, we may assume that the moving distance for all heavy disks in $D$ is $0$.
        Since $G_{\D_\phi + D} \in \kconnected[1]$, the light disks must have been rotated (counter)clockwise by \Cref{claim:twomov_movdisks}, and all clause components must be connected to at least one variable gadget.
        By the same observation, the rotations are the only two movements that connect each variable gadget $\X_i$ with total moving distance of $3 \cdot (2\max(k_i,k'_i)+2)$, for each $i\in [n]$.
        Moreover, no light disk can be moved to a different variable gadget by a distance of at most $3m'$.
        Consequently, if $G_{\D_\phi + D} \in \kconnected[1]$ with $\|w \cdot D\|\le 3m'$, then $D$ describes exactly the rotations of light disks for all variable gadgets.
        We construct an assignment such that $x_i$ is true if disks in $\X_i$ were rotated counterclockwise and false otherwise.
        By the structure of $G_\phi$, the assignment satisfies $\phi$.
    \end{proof}
    The hardness follows from \Cref{claim:3sat_iff_ggedconn}.
    Following the structure of $G_\phi$, we place a polynomial number of disks in the variable gadgets and clause components to construct~$\D_\phi$.
    There are no cycles in variable gadgets and components for clauses, thus $G_{\D_\phi}$ is a forest.
    Moreover, the weights consist of two values: $1$ and $W$.
    Finally, all centres are integers and weights are described using polynomials of $n$ and $m$, which yields the strong \NP-hardness.
\end{proof}

\end{document}